\newcommand{\R}{\mathbb{R}}
\newcommand{\mat}[1]{\mathbf{#1}}
\newcommand{\rank}{\operatorname{rank}}
\newcommand{\diag}{\operatorname{diag}}
\newcommand{\vecm}{\operatorname{vec}}
\newcommand{\norm}[1]{\|#1\|}
\newcommand{\sval}{\mathfrak{s}}
\newcommand{\ltwo}{\ell^2}
\newcommand{\lone}{\ell^1}
\newcommand{\linf}{\ell^{\infty}}
\newcommand{\lp}{\ell^p}
\newcommand{\lpq}{\ell^q}
\newcommand{\ltworat}{\ell^2_{\cR}}
\newcommand{\lonerat}{\ell^1_{\cR}}
\newcommand{\lprat}{\ell^p_{\cR}}
\newcommand{\sstar}{\Sigma^\star}
\newcommand{\cR}{\mathcal{R}}
\newcommand{\A}{\mat{A}}
\newcommand{\hA}{\hat{\mat{A}}}
\newcommand{\tA}{\tilde{\mat{A}}}
\newcommand{\B}{\mat{B}}
\newcommand{\C}{\mat{C}}
\newcommand{\tC}{\tilde{\mat{C}}}
\newcommand{\azero}{\boldsymbol{\alpha}_0}
\newcommand{\tazero}{\tilde{\boldsymbol{\alpha}}_0}
\newcommand{\hazero}{\hat{\boldsymbol{\alpha}}_0}
\newcommand{\ainf}{\boldsymbol{\alpha}_{\infty}}
\newcommand{\hainf}{\hat{\boldsymbol{\alpha}}_{\infty}}
\newcommand{\tainf}{\tilde{\boldsymbol{\alpha}}_{\infty}}
\newcommand{\bzero}{\boldsymbol{\beta}_0}
\newcommand{\binf}{\boldsymbol{\beta}_{\infty}}
\newcommand{\czero}{\boldsymbol{\gamma}_0}
\newcommand{\cinf}{\boldsymbol{\gamma}_{\infty}}
\newcommand{\tcinf}{\tilde{\boldsymbol{\gamma}}_{\infty}}
\newcommand{\wa}{\langle \azero, \ainf, \{\A_\sigma\} \rangle}
\newcommand{\waS}{\langle \azero, \ainf, \{\A_\sigma\}_{\sigma \in \Sigma} \rangle}
\newcommand{\waQ}{\langle \mQ^\top \azero, \mQ^{-1} \ainf, \{\mQ^{-1} \A_\sigma
\mQ\} \rangle}
\newcommand{\hwa}{\langle \hazero, \hainf, \{\hat{\A}_\sigma\} \rangle}
\newcommand{\twa}{\langle \tazero, \tainf, \{\tilde{\A}_\sigma\} \rangle}
\newcommand{\wb}{\langle \bzero, \binf, \{\B_\sigma\} \rangle}
\newcommand{\wc}{\langle \czero, \cinf, \{\C_\sigma\} \rangle}
\newcommand{\epsopt}{\varepsilon^{\mathsf{opt}}}
\newcommand{\fsva}{\hat{f}^{\mathsf{sva}}}
\newcommand{\fopt}{\hat{f}^{\mathsf{opt}}}
\newcommand{\normop}[1]{\|#1\|_{\mathrm{op}}}
\newcommand{\normsp}[1]{\|#1\|_{\mathrm{S},p}}
\newcommand{\normsone}[1]{\|#1\|_{\mathrm{S},1}}
\newcommand{\normstwo}[1]{\|#1\|_{\mathrm{S},2}}
\newcommand{\normsinf}[1]{\|#1\|_{\mathrm{S},\infty}}
\newcommand{\normhp}[1]{\|#1\|_{\mathrm{H},p}}
\newcommand{\normhone}[1]{\|#1\|_{\mathrm{H},1}}
\newcommand{\normtr}[1]{\|#1\|_{\mathrm{tr}}}
\newcommand{\normf}[1]{\|#1\|_{\mathrm{F}}}
\renewcommand{\H}{\mat{H}}
\renewcommand{\v}{\mat{v}}
\newcommand{\mM}{\mat{M}}
\newcommand{\mN}{\mat{N}}
\newcommand{\mP}{\mat{P}}
\newcommand{\mS}{\mat{S}}
\newcommand{\mG}{\mat{G}}
\newcommand{\mGa}{\boldsymbol{\Gamma}}
\newcommand{\mU}{\mat{U}}
\newcommand{\mV}{\mat{V}}
\newcommand{\mD}{\mat{D}}
\newcommand{\mQ}{\mat{Q}}
\newcommand{\mR}{\mat{R}}
\newcommand{\mI}{\mat{I}}
\newcommand{\mT}{\mat{T}}
\renewcommand{\mp}{\mat{p}}
\newcommand{\me}{\mat{e}}
\newtheorem{theorem}{Theorem}
\newtheorem{proposition}[theorem]{Proposition}
\newtheorem{lemma}[theorem]{Lemma}
\newtheorem{definition}[theorem]{Definition}
\theoremstyle{remark}
\newtheorem{claim}{Claim}
\newtheorem{assumption}{Assumption}
\title{A Canonical Form for Weighted Automata and Applications to Approximate
Minimization}
\author{Borja Balle}
\author{Prakash Panangaden}
\author{Doina Precup}
\affil{School of Computer Science \\ McGill University \\

{\footnotesize \texttt{\{bballe|prakash|dprecup\}@cs.mcgill.ca}}}
\date{April 23, 2015}
\begin{document}

\maketitle

\begin{abstract}

We study the problem of constructing approximations to a weighted automaton.
Weighted finite automata (WFA) are closely related to the theory of rational
series. A rational series is a function from strings to real numbers that can be
computed by a WFA. Among others, this includes probability distributions
generated by hidden Markov models and probabilistic automata. The relationship
between rational series and WFA is analogous to the relationship between regular
languages and ordinary automata. Associated with such rational series are
infinite matrices called Hankel matrices which play a fundamental role in the
theory of minimal WFA.
Our contributions are:
(1) an effective procedure for computing the singular value decomposition
(SVD) of such \emph{infinite} Hankel matrices based on their finite
representation in terms of WFA;
(2) a new \emph{canonical form} for WFA based on this SVD decomposition; and,
(3) an algorithm to construct \emph{approximate minimizations} of a given WFA.
The goal of our approximate minimization algorithm is to start from a minimal
WFA and produce a smaller WFA that is close to the given one in a certain
sense. The desired size of the approximating automaton is given as input. We
give bounds describing how well the approximation emulates the behavior of the
original WFA.

The study of this problem is motivated by the analysis of machine learning
algorithms that synthetize weighted automata from spectral decompositions of
finite Hankel matrices.
It is known that when the number of states of the target automaton is correctly
guessed, these algorithms enjoy consistency and finite-sample guarantees in the
probably approximately correct (PAC) learning model.
It has also been suggested that asking the learning algorithm to produce a
model smaller than the true one will still yield useful models with reduced
complexity.
Our results in this paper vindicate these ideas and confirm intuitions
provided by empirical studies.
%
%
Beyond learning problems, our techniques can also be used to reduce the
complexity of any algorithm working with WFA, at the expense of incurring a
small, controlled amount of error.

\end{abstract}


\section{Introduction}\label{sec:intro}

We address a relatively new issue for the logic and computation community:
the \emph{approximate} minimization of transition systems or automata.
This concept is appropriate for systems that are quantitative in some sense:
weighted automata, probabilistic automata of various kinds and timed
automata.  This paper focusses on weighted automata where we are able to
make a number of contributions that combine ideas from duality with ideas
from the theory of linear operators and their spectrum.  Our new contributions
are
\begin{itemize}
\item An algorithm for the SVD decomposition of \emph{infinite} Hankel matrices
based on their representation in terms of weighted automata.

\item A new canonical form for weighted automata arising from the SVD of its
corresponding Hankel matrix.

\item An algorithm to construct approximate minimizations of given weighted
automata by truncating the canonical form.
\end{itemize}

Minimization of automata has been a major subject since the 1950s, starting
with the now classical work of the pioneers of automata theory.  Recently
there has been activity on novel algorithms for minimization based on
duality~\cite{Bezhanishvili12,Bonchi14} which are ultimately based on a
remarkable algorithm due to Brzozowski from the 1960s~\cite{Brzozowski62}.
The general co-algebraic framework permits one to generalize Brzozowski's
algorithm to other classes of automata like weighted automata.

Weighted automata are very useful in a variety of practical settings, such
as machine learning (where they are used to represent predictive models for
time series data and text), but also in the general theory of quantitative
systems.  There has also been interest in this type of representation, for
example, in concurrency theory~\cite{boreale2009weighted} and in
semantics~\cite{Bonchi2012}. We discuss the machine learning motivations at
greater length, as they are the main driver for the present work. However,
we emphasize that the genesis of one set of key ideas came from previous
work on a coalgebraic view of minimization.


Spectral techniques for learning latent variable models have recently drawn a
lot of attention in the machine learning community.
Following the significant milestone papers~\cite{hsu09,denis}, in which an
efficient spectral 
algorithm for learning hidden Markov models (HMM) and stochastic rational
languages was given, the field has grown very rapidly.
The original algorithm, which is based on singular value decompositions of
Hankel matrices, has been extended to reduced-rank HMM \cite{siddiqi10},
predictive state representations (PSR) \cite{Boots:2009}, finite-state
transducers \cite{fst,bailly2013fst}, and many other classes of functions on
strings \cite{bailly11,nips12,recasens2013spectral}.
Although each of these papers works with slightly different problems and
analyses techniques, the key ingredient turns out to be always the same:
parametrize the target model as a weighted finite automaton (WFA) and learn
this WFA from the SVD of a finite sub-block of its Hankel matrix
\cite{mlj13spectral}.
Therefore, it is possible (and desirable) to study all these learning algorithms
from the point of view of rational series, which are exactly the class of
real-valued functions on strings that can be computed by WFA.
%
%
In addition to their use in spectral learning algorithms, weighted automata are
also commonly used in other areas of pattern recognition for sequences,
including: speech recognition \cite{MohriPereiraRiley2008}, image compression
\cite{AlbertLari2009}, natural language processing
\cite{knight2009applications}, model checking \cite{baier2009model}, and
machine translation \cite{DeGispert2010}.


Part of the appeal of spectral learning techniques comes from their
computational superiority when compared to iterative algorithms like
Expectation--Maximization (EM) \cite{dempster}.
Another very attractive property of spectral methods is the possibility of
proving rigorous statistical guarantees about the learned hypothesis.
For example, under a realizability assumption, these methods are known to be
consistent and amenable to finite-sample analysis in the PAC sense~\cite{hsu09}.
An important detail is that, in addition to realizability, these results work
under the assumption that the user correctly guesses the number of latent
states of the target distribution.
Though this is not a real caveat when it comes to using these algorithms in
practice -- the optimal number of states can be identified using a model
selection procedure \cite{icml2014balle} -- it is one of the barriers in
extending the statistical analysis of spectral methods to the non-realizable
setting.

Tackling the non-realizability question requires, as a special case, dealing
with the situation in which data is generated from a WFA with $n$ states and the
learning algorithm is asked to produce a WFA with $\hat{n} < n$ states.
This case is already a non-trivial problem which -- barring the noisiness
introduced by the use of statistical data instead of the original WFA -- can be
easily interpreted as an approximate minimization of WFA.
From this point of view, the possibility of using spectral learning algorithms
for approximate minimization of a small class of hidden Markov models has
been recently considered in \cite{kulesza2014low}. This paper also presents some
restricted theoretical results bounding the error between the original and
minimized HMM in terms of the total variation distance. Though incomparable to
ours, these bounds are the closest work in the literature to our
approach\footnote{After the submission of this manuscript we became aware of
the concurrent work \cite{kulesza2015low}, where a problem similar to the one
considered here is addressed, albeit different methods are used and the results
are not directly comparable.}.
Another paper on which the issue of approximate minimization of weighted
automata is considered in a tangential manner is \cite{14KW-ICALP}.
In this case the authors again focus on an $\lone$-like accuracy measure to
compare two automata: an original one, and another one obtained by removing
transitions with small weights occurring during an exact minimization procedure.
Though the removal operation is introduced as a means of obtaining a numerically
stable minimization algorithm, the paper also presents some experiments
exploring the effect of removing transitions with larger weights.
With the exception of these timid results, the problem of approximate
minimization remains largely unstudied.
In the present paper we set out to initiate the systematic study of approximate
minimization of WFA.
We believe our results -- beyond their intrinsic automata-theoretic interest --
will also provide tools for addressing important problems in learning theory,
including the robust statistical analysis of spectral learning algorithms.

Let us conclude this introduction by mentioning the potential wide applicability
of our results in the field of algorithms for manipulating, combining, and
operating with quantitative systems. In particular, the possibility of
obtaining reduced-size models incurring a small, controlled amount of error
might provide a principled way for speeding up a number of such algorithms.

The content of the paper is organized as follows.
Section~\ref{sec:background} defines the notation that will be used
throughout the paper and reviews a series of well-known results that will be needed.
Section~\ref{sec:sva} establishes the existence of a canonical form for WFA
and provides a polynomial-time algorithm for computing it (the first major
contribution of this work).
The computation of this canonical form lies at the heart of our approximate
minimization algorithm, which is described and analyzed in
Section~\ref{sec:approxmin}. Our main theoretical result in this section is to
establish bounds describing how well the approximation obtained by the algorithm
emulates the behavior of the original WFA. The proof is quite lengthy and is
deferred to Appendix~\ref{sec:proof}.
In Section~\ref{sec:discuss} we discuss two technical aspects of our work: its
relation and consequences with the mathematical theory of low-rank approximation
of rational series; and the (ir)relevance of an assumption made in our results
from Sections~\ref{sec:sva} and~\ref{sec:approxmin}.
We conclude with Section~\ref{sec:conclusion}, where we point out interesting
future research directions.


%
%
%

%

\section{Background}\label{sec:background}

\subsection{Notation for Matrices}
Given a positive integer $d$, we denote $[d] = \{1,\ldots,d\}$.
We use bold letters to denote vectors $\v \in \R^d$ and matrices $\mM \in
\R^{d_1 \times d_2}$.
Unless explicitly stated, all vectors are column vectors.
We write $\mI$ for the identity matrix, $\diag(a_1,\ldots,a_n)$ for a
diagonal matrix with $a_1, \ldots, a_n$ in the diagonal, and
$\diag(\mM_1,\dots,\mM_n)$ for the block-diagonal matrix containing the square
matrices $\mM_i$ along the diagonal.
The $i$th coordinate vector $(0,\ldots,0,1,0,\ldots,0)^\top$ is denoted by
$\me_i$.
For a matrix $\mM \in \R^{d_1 \times d_2}$, $i \in [d_1]$, and $j \in [d_2]$, we
use $\mM(i,:)$ and $\mM(:,j)$ to denote the $i$th row and the $j$th column of
$\mM$ respectively.
Given a matrix $\mM \in \R^{d_1 \times d_2}$ we can consider the vector
$\vecm(\mM) \in \R^{d_1 \cdot d_2}$ obtained by concatenating the columns of
$\mM$ so that $\vecm(\mM)((i-1) d_2 + j) = \mM(i,j)$.
Given two matrices $\mM \in \R^{d_1 \times d_2}$ and $\mM' \in \R^{d_1' \times
d_2'}$ we denote their Kronecker (or tensor) product by $\mM \otimes \mM' \in
\R^{d_1 d_1' \times d_2 d_2'}$, with entries given by $(\mM \otimes
\mM')((i-1)d_1' + i', (j-1)d_2' + j') = \mM(i,j) \mM'(i',j')$, where $i \in
[d_1]$, $j \in [d_2]$, $i' \in [d_1']$, and $j' \in [d_2']$.
For simplicity, we will sometimes write $\mM^{\otimes 2} = \mM \otimes \mM$, and
similarly for vectors.
A \emph{rank factorization} of a rank $n$ matrix $\mM \in \R^{d_1 \times d_2}$
is an expression of the form $\mM = \mQ \mR$ where $\mQ \in \R^{d_1 \times n}$
and $\mR \in \R^{n \times d_2}$ are full-rank matrices.

Given a matrix $\mM \in \R^{d_1 \times d_2}$ of rank $n$, its \emph{singular
value decomposition} (SVD)\footnote{To be more precise, this is a \emph{reduced}
singular value decomposition, since the inner dimensions of the decomposition
are all equal to the rank. In this paper we shall always use the term SVD to
mean reduced SVD.} is a decomposition of the form $\mM = \mU \mD \mV^\top$ where
$\mU \in \R^{d_1 \times n}$, $\mD \in \R^{n \times n}$, and $\mV \in \R^{d_2
\times n}$ are such that: $\mU^\top \mU = \mV^\top \mV = \mI$, and $\mD =
\diag(\sval_1,\ldots,\sval_n)$ with $\sval_1 \geq \cdots \geq \sval_n > 0$. The
columns of $\mU$ and $\mV$ are called left and right singular vectors
respectively, and the $\sval_i$ are its singular values.
The SVD is unique (up to sign changes in associate singular vectors) whenever
all inequalities between singular values are strict.
%
%
A similar spectral decomposition exists for bounded operators between separable
Hilbert spaces. In particular, for finite-rank bounded operators one can write
the infinite matrix corresponding to the operator in a fixed basis, and recover
a concept of reduced SVD decomposition for such infinite matrices which shares
the same properties described above for finite matrices \cite{zhu1990operator}.



For $1 \leq p \leq \infty$ we will write $\norm{\v}_p$ for the $\lp$ norm of
vector $\v$.
The corresponding \emph{induced norm} on matrices is $\norm{\mM}_p =
\sup_{\norm{\v}_p = 1} \norm{\mM \v}_p$.
In addition to induced norms, we will also need to define Schatten norms.
If $\mM$ is a rank-$n$ matrix with singular values $\mat{s} = (\sval_1, \ldots,
\sval_n)$, the \emph{Schatten $p$-norm} of $\mM$ is given by $\normsp{\mM} =
\norm{\mat{s}}_p$.
Most of these norms have given names: $\norm{\cdot}_2 = \normsinf{\cdot} =
\normop{\cdot}$ is the \emph{operator (or spectral) norm}; $\normstwo{\cdot} =
\normf{\cdot}$ is the \emph{Frobenius norm}; and $\normsone{\cdot} =
\normtr{\cdot}$ is the \emph{trace (or nuclear) norm}.
For a matrix $\mM$ the \emph{spectral radius} is the largest modulus $\rho(\mM)
= \max_i |\lambda_i(\mM)|$ among the eigenvalues of $\mM$.
For a square matrix $\mM$, the series $\sum_{k \geq 0} \mM^k$ converges if and
only if $\rho(\mM) < 1$, in which case the sum yields $(\mI - \mM)^{-1}$.

Sometimes we will name the columns and rows of a matrix using
ordered index sets $\mathcal{I}$ and $\mathcal{J}$. In this case we will
write $\mat{M} \in \R^{\mathcal{I} \times \mathcal{J}}$ to denote a matrix of
size $|\mathcal{I}| \times |\mathcal{J}|$ with rows indexed by $\mathcal{I}$ and
columns indexed by $\mathcal{J}$.

\subsection{Weighted Automata, Rational Series, and Hankel Matrices}

Let $\Sigma$ be a fixed finite alphabet with $|\Sigma| = k$ symbols, and
$\sstar$ the set of all finite strings with symbols in $\Sigma$. We use
$\lambda$ to denote the empty string.
Given two strings $p, s \in \sstar$ we write $w = p s$ for their concatenation,
in which case we say that $p$ is a prefix of $w$ and $s$ is a suffix of $w$.
We denote by $|w|$  the length (number of symbols) in a string $w \in
\sstar$.
Given a set of strings $X \subseteq \sstar$ and a function $f : \sstar \to \R$,
we denote by $f(X)$ the summation $\sum_{x \in X} f(x)$ if defined. For example,
we will write $f(\Sigma^t) = \sum_{|x| = t} f(x)$ for any $t \geq 0$.

Now we introduce our notation for weighted automata. We want to note that we
will not be dealing with weights on arbitrary semi-rings; this paper only
considers automata with real weights, with the usual addition and multiplication
operations. In addition, instead of resorting to the usual description of
automata as directed graphs with labeled nodes and edges, we will use a
linear-algebraic representation, which is more convenient.
A \emph{weighted finite automata} (WFA) of dimension $n$ over $\Sigma$ is a
tuple $A = \waS$ where $\azero \in \R^n$ is the vector of \emph{initial
weights}, $\ainf \in \R^n$ is the vector of \emph{final weights}, and for each
symbol $\sigma \in \Sigma$ the matrix $\A_\sigma \in \R^{n \times n}$ contains
the \emph{transition weights} associated with $\sigma$.
Note that in this representation a fixed initial state is given by $\azero$ (as
opposed to formalisms that only specify a transition structure), and the
transition endophormisms $\A_\sigma$ and the final linear form $\ainf$ are given
in a fixed basis on $\R^n$ (as opposed to abstract descriptions where these
objects are represented as basis-independent elements over some $n$-dimensional
vector space).

We will use $\dim(A)$ to denote the dimension of a WFA. The
state-space of a WFA of dimension $n$ is identified with the integer set $[n]$.
Every WFA $A$ \emph{realizes} a function $f_A : \sstar \to \R$ which, given a
string $x = x_1 \cdots x_t \in \sstar$, produces
\begin{equation*}
f_A(x) = \azero^\top \A_{x_1} \cdots \A_{x_t} \ainf =
\azero^\top \A_x \ainf \enspace,
\end{equation*}
where we defined the shorthand notation $\A_x = \A_{x_1} \cdots \A_{x_t}$ that
will be used throughout the paper.
A function $f : \sstar \to \R$ is called \emph{rational} if there exists a WFA
$A$ such that $f = f_A$.
The \emph{rank} of a rational function $f$ is the dimension of the smallest WFA
realizing $f$.
We say that a WFA is \emph{minimal} if $\dim(A) = \rank(f_A)$.

%

An important operation on WFA is \emph{conjugation} by an invertible matrix.
Suppose $A$ is a WFA of dimension $n$ and $\mQ \in \R^{n \times n}$ is
invertible. Then we can define the WFA
\begin{equation}\label{eqn:wfaconjugation}
A' = \mQ^{-1} A \mQ = \waQ \enspace.
\end{equation}
It is immediate to check that $f_A = f_{A'}$.
This means that the function computed by a WFA is invariant under conjugation,
and that given a rational function $f$, there exist infinitely many WFA realizing
$f$.
In addition, the following result characterizes all minimal WFA realizing a
particular rational function.

\begin{theorem}[\cite{berstel2011noncommutative}]\label{thm:conjugacy}
If $A$ and $B$ are minimal WFA realizing the same function, then $B = \mQ^{-1}
A \mQ$ for some invertible $\mQ$.
\end{theorem}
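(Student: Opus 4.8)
The key object mediating between the two automata is the Hankel matrix $\H_f$ of the common function $f = f_A = f_B$, whose rows and columns are indexed by $\sstar$ with entries $\H_f(p,s) = f(ps)$. The plan is to exploit the fact that any WFA $A = \wa$ induces a factorization of $\H_f$ through the $n$-dimensional space: define the \emph{forward} map $\mathbf{P}_A \in \R^{\sstar \times n}$ with rows $\mathbf{P}_A(p,:) = \azero^\top \A_p$ and the \emph{backward} map $\mathbf{S}_A \in \R^{n \times \sstar}$ with columns $\mathbf{S}_A(:,s) = \A_s \ainf$. Then $\H_f = \mathbf{P}_A \mathbf{S}_A$, and the same construction for $B$ gives $\H_f = \mathbf{P}_B \mathbf{S}_B$. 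First I would argue that for a \emph{minimal} WFA both factors have full rank $n$: since $\rank(\H_f) = \rank(f) = n$ by minimality, the factorization forces $\rank(\mathbf{P}_A) = \rank(\mathbf{S}_A) = n$, and likewise for $B$.

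Next I would construct the candidate change of basis. Because $\mathbf{P}_A$ and $\mathbf{P}_B$ are two rank-$n$ factorizations of (the column space of) $\H_f$, there is an invertible $\mQ \in \R^{n\times n}$ with $\mathbf{P}_B = \mathbf{P}_A \mQ$; concretely, one takes $\mQ = \mathbf{P}_A^+ \mathbf{P}_B$ using the pseudoinverse, and checks it is invertible and that $\mathbf{S}_B = \mQ^{-1} \mathbf{S}_A$ follows from $\mathbf{P}_A \mathbf{S}_A = \mathbf{P}_B \mathbf{S}_B$ together with the left-invertibility of $\mathbf{P}_A$. The remaining work is to verify that this single $\mQ$ simultaneously conjugates all the pieces of the automaton: (i) comparing the row indexed by the empty string $\lambda$ gives $\azero^\top = \azero^\top$ on both sides wrapped appropriately, yielding $\mathbf{Q}^\top$-relation $\bzero = \mQ^\top \azero$; (ii) comparing the column indexed by $\lambda$ gives $\binf = \mQ^{-1}\ainf$; and (iii) for each $\sigma$, using the rows/columns indexed by strings of the form $p\sigma$ and the "shift" identity $\mathbf{P}_A(p\sigma,:) = \azero^\top \A_p \A_\sigma$, one gets $\mathbf{P}_B \B_\sigma = \mathbf{P}_A \A_\sigma \mQ$ as maps on the relevant shifted Hankel block, hence $\B_\sigma = \mQ^{-1}\A_\sigma \mQ$ after cancelling the left-invertible factor $\mathbf{P}_A$ (and $\mathbf{P}_B = \mathbf{P}_A\mQ$).

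The main obstacle, and the step deserving the most care, is the cancellation in item (iii): one must be sure that the identity $\mathbf{P}_A \A_\sigma \mathbf{S}_A = \mathbf{P}_A \mQ \B_\sigma \mathbf{S}_B$ (which holds because both sides equal the $\sigma$-shifted Hankel matrix $\H_f^\sigma$ with entries $f(p\sigma s)$) can indeed be reduced to $\A_\sigma = \mQ \B_\sigma \mQ^{-1}$. This uses that $\mathbf{P}_A$ has a left inverse and $\mathbf{S}_A$ has a right inverse, which is exactly where minimality of $A$ is used; symmetrically minimality of $B$ is used to even speak of $\mathbf{S}_B$ as right-invertible when needed. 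A clean way to organize this is to first prove the auxiliary fact that a WFA is minimal iff its forward and backward maps both have rank equal to its dimension, and then phrase the whole argument in terms of these maps. One should also note the harmless ambiguity: $\mQ$ is unique here, but this is consistent with the uniqueness statement because we have fixed bases; the theorem asserts existence of \emph{some} $\mQ$, which the construction provides.
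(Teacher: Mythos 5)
The paper does not prove this theorem at all---it is imported from Berstel--Reutenauer as a black box---so there is no in-paper proof to compare against; your argument is the standard one and is essentially correct, and it uses exactly the same forward--backward factorization machinery ($\H_f = \mP_A\mS_A^\top$, equality of column spaces, change of basis $\mQ$) that the paper itself deploys in the proof of Lemma~\ref{lem:duality} (which is the converse construction: from a rank factorization to a minimal WFA inducing it). The one place to be careful is your concrete recipe $\mQ = \mP_A^{+}\mP_B$: the theorem is stated for arbitrary minimal WFA, where the columns of $\mP_A \in \R^{\sstar\times n}$ need not lie in $\ltwo$, so the Moore--Penrose formula $(\mP_A^\top\mP_A)^{-1}\mP_A^\top$ involves a divergent sum; instead one should define $\mQ$ by expressing the columns of $\mP_B$ in the basis given by the columns of $\mP_A$ (or build a left inverse of $\mP_A$ from any $n$ rows forming an invertible submatrix, which exist by full column rank). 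With that substitution the cancellation steps in (i)--(iii) all go through as you describe, and the rest of your plan---full rank of both factors from minimality, reading off $\bzero$ and $\binf$ from the $\lambda$-row and $\lambda$-column, and the $\sigma$-shift identity for the transition matrices---is sound.
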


A function $f : \sstar \to \R$ can be trivially identified with an element from
the free vector space $\R^{\sstar}$.
This vector space contains several subspaces which will play an important role
in the rest of the paper.
One is the subspace of all rational functions, which we denote by $\cR(\Sigma)$.
Note that $\cR(\Sigma)$ is a linear subspace, because if $f, g
\in \cR(\Sigma)$ and $c \in \R$, then $c f$ and $f + g$ are both rational
\cite{berstel2011noncommutative}.
Another important family of subspaces of $\R^{\sstar}$ are the ones containing
all functions with \emph{finite} $p$-norm for some $1 \leq p \leq \infty$, which is
given by $\norm{f}_p^p = \sum_{x \in \sstar} |f(x)|^p$ for finite $p$, and
$\norm{f}_\infty = \sup_{x \in \sstar} |f(x)|$; we denote this space by
$\lp(\Sigma)$.
Note that like in the usual theory of Banach spaces of sequences, we have
$\lp(\Sigma) \subset \lpq(\Sigma)$ for $p < q$.
Of these, $\ltwo(\Sigma)$ can be endowed with the structure of a separable
Hilbert space with the inner product $\langle f, g \rangle = \sum_{x \in \sstar}
f(x) g(x)$. Recall that in this case we have the \emph{Cauchy--Schwarz
inequality} $\langle f, g \rangle^2 \leq \norm{f}_2^2 \, \norm{g}_2^2$.
In addition, we have its generalization, \emph{Hölder's inequality}: given $f
\in \lp(\Sigma)$ and $g \in \lpq(\Sigma)$ with $p^{-1} + q^{-1} \leq 1$, then
$\norm{f \cdot g}_1 \leq \norm{f}_p \norm{g}_q$, where $(f \cdot g)(x) = f(x)
g(x)$.
By intersecting any of the previous subspaces with $\cR(\Sigma)$ one obtains
$\lprat(\Sigma) = \cR(\Sigma) \cap \lp(\Sigma)$, the normed vector space containing
all rational functions with finite $p$-norm.
In most cases the alphabet $\Sigma$ will be clear from the context and we will
just write $\cR$, $\lp$, and $\lprat$.

The space $\lonerat$ of \emph{absolutely convergent rational series} will play a
central role in the theory to be developed in this paper.
An important example of functions in $\lonerat$ is that of probability
distributions over $\sstar$ realized by WFA, also known as rational stochastic
languages.
Formally speaking, these are rational functions $f \in \cR$ satisfying the
constraints $f(x) \geq 0$ and $\sum_x f(x) = 1$.
%
%
This implies that $\lonerat$ includes all functions realized by probabilistic
automata with stopping probabilities \cite{DupontDE05}, hidden Markov models
with absorbing states \cite{Rabiner:1990}, and  predictive state representations
for dynamical systems with discounting or finite horizon \cite{Singh:2004}.
%
Note that given a WFA $A$, the membership problem $f_A \in \lonerat$ is known
to be semi-decidable \cite{decideabsconv}.


Let $\H \in \R^{\sstar \times \sstar}$ be a bi-infinite matrix whose rows and
columns are indexed by strings. We say that $\H$ is \emph{Hankel}%
\footnote{In real analysis a matrix $\mM$ is Hankel if $\mM(i,j)=\mM(k,l)$
whenever $i + j = k + l$, which implies that $\mM$ is symmetric. In our case we
have $\H(p,s) = \H(p',s')$ whenever $p s = p' s'$, but $\H$ is not symmetric
because string concatenation is not commutative whenever $|\Sigma| > 1$.}
if for all strings $p, p', s, s' \in \sstar$ such that $p s = p' s'$ we have
$\H(p,s) = \H(p',s')$.
Given a function $f : \sstar \to \R$ we can associate with it a Hankel matrix
$\H_f \in \R^{\sstar \times \sstar}$ with entries $\H_f(p,s) = f(p s)$.
Conversely, given a matrix $\H \in \R^{\sstar \times \sstar}$ with the Hankel
property, there exists a unique function $f : \sstar \to \R$ such that $\H_f =
\H$.
The following well-known theorem characterizes all Hankel matrices of finite
rank.

\begin{theorem}[\cite{berstel2011noncommutative}]\label{lem:fundamentalWFA}
For any function $f : \sstar \to \R$, the Hankel matrix $\H_f$ has finite rank
$n$ if and only if $f$ is rational with $\rank(f) = n$. In other words,
$\rank(f) = \rank(\H_f)$ for any function $f : \sstar \to \R$.
\end{theorem}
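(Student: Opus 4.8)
The plan is to prove the two implications separately and then read off the displayed identity $\rank(f) = \rank(\H_f)$ in all cases.

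\textbf{Rational $\Rightarrow$ finite-rank Hankel.} Suppose $f$ is rational and let $A = \wa$ be a WFA with $f = f_A$ and $\dim(A) = n = \rank(f)$. Using $\H_f(p,s) = f(ps) = \azero^\top \A_p \A_s \ainf$, define $\mP \in \R^{\sstar \times n}$ and $\mS \in \R^{n \times \sstar}$ by $\mP(p,:) = \azero^\top \A_p$ and $\mS(:,s) = \A_s \ainf$. Then $\H_f = \mP \mS$, whence $\rank(\H_f) \le n$. The matching lower bound (and hence $\rank(\H_f)=n$) will come for free once the converse implication is available.

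\textbf{Finite-rank Hankel $\Rightarrow$ rational.} Assume $\rank(\H_f) = n < \infty$ and fix a rank factorization $\H_f = \mP \mS$ with $\mP \in \R^{\sstar \times n}$ and $\mS \in \R^{n \times \sstar}$ both of rank $n$. The structural heart of the argument is that the Hankel property lets us shift $\H_f$ by a symbol: the bi-infinite matrix $\H^\sigma$ with $\H^\sigma(p,s) = f(p\sigma s)$ satisfies simultaneously $\H^\sigma(p,s) = \H_f(p\sigma, s) = \H_f(p, \sigma s)$, so it is at once a row-submatrix and a column-submatrix of $\H_f$; in particular its columns lie in the column space of $\H_f$ $=$ column space of $\mP$, and its rows lie in the row space of $\mS$. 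Since $\mP$ has full column rank and $\mS$ full row rank, this forces a unique $\A_\sigma \in \R^{n \times n}$ with $\H^\sigma = \mP \A_\sigma \mS$; concretely one may take $\A_\sigma = \mP^{+} \H^\sigma \mS^{+}$ for any left inverse $\mP^{+}$ of $\mP$ and right inverse $\mS^{+}$ of $\mS$ chosen to have finite support (such finitely supported one-sided inverses exist by selecting $n$ independent rows of $\mP$, resp.\ columns of $\mS$), which keeps every matrix product finite. Setting $\azero^\top = \mP(\lambda,:)$ and $\ainf = \mS(:,\lambda)$ then yields a candidate WFA $A = \wa$ of dimension $n$.

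To verify $f = f_A$ I would show by induction on $|x|$ that $\H^x = \mP \A_x \mS$, where $\H^x(p,s) = f(pxs)$; the base case $x = \lambda$ is just $\H_f = \mP \mS$. For the inductive step, comparing the two expressions $f(p\sigma s) = \mP(p,:)\,\mS(:,\sigma s)$ and $f(p\sigma s) = \mP(p,:)\,\A_\sigma\,\mS(:,s)$ valid for all $p$, and using that the rows of $\mP$ span $\R^n$, gives the recursion $\mS(:,\sigma s) = \A_\sigma\,\mS(:,s)$; substituting this into $\H^{x\sigma}(p,s) = \H^x(p,\sigma s) = \mP(p,:)\A_x\,\mS(:,\sigma s)$ produces $\H^{x\sigma} = \mP\,\A_x\A_\sigma\,\mS = \mP\,\A_{x\sigma}\,\mS$. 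Evaluating at $p = s = \lambda$ gives $f(x) = \H_f(x,\lambda) = \H^x(\lambda,\lambda) = \azero^\top \A_x \ainf = f_A(x)$, so $f$ is rational with $\rank(f) \le n$. Combining with the first part applied to this WFA we get $\rank(\H_f) \le \rank(f) \le n = \rank(\H_f)$, i.e.\ $\rank(f) = \rank(\H_f)$; and when $f$ is not rational, the same reconstruction shows $\H_f$ cannot have finite rank, so the identity $\rank(f) = \rank(\H_f)$ holds in every case.

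\textbf{Main obstacle.} The only real subtlety lies in the reconstruction direction: giving meaning to the generalized inverses and matrix products indexed by the bi-infinite set $\sstar$ (handled by insisting on finitely supported one-sided inverses), checking that $\A_\sigma$ is independent of these choices, and converting the Hankel shift identities into the recursion $\mS(:,\sigma s) = \A_\sigma\,\mS(:,s)$ that fuels the induction. Everything else is routine bookkeeping.
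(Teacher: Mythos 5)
Your argument is correct: this is the classical Fliess/Carlyle--Paz reconstruction, and the paper itself gives no proof of this theorem -- it is imported verbatim from \cite{berstel2011noncommutative} -- so there is nothing in-paper to compare against. Your handling of the two genuine subtleties is sound: the finitely supported one-sided inverses make every bi-infinite matrix product well-defined, and the identity $\mP\,\mS^\sigma = \mP^\sigma\,\mS$ (columns of the shifted matrix living in $\operatorname{col}(\mP)$, rows in $\operatorname{row}(\mS)$) legitimately forces the unique $\A_\sigma$ with $\H^\sigma = \mP\,\A_\sigma\,\mS$, after which the recursion $\mS(:,\sigma s) = \A_\sigma\,\mS(:,s)$ and the induction go through as you describe. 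One cosmetic remark: the sentence ``comparing the two expressions \dots valid for all $p$'' reads as if you were assuming the identity you want to prove; it is fine only because $\H^\sigma = \mP\,\A_\sigma\,\mS$ has already been established at that point, so you may want to reorder the exposition to make that dependence explicit. Note also that your forward direction matches exactly the forward--backward factorization $\H_f = \mP_A\mS_A^\top$ that the paper does spell out in Section~\ref{sec:svaexists}, so the only part you are adding beyond what the paper records is the converse reconstruction.
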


\section{A Canonical Form for WFA}\label{sec:sva}

In this section we discuss the existence and computation of a canonical form for
WFA realizing absolutely convergent rational functions.
Our canonical form is strongly related to the singular value decomposition of
infinite Hankel matrices.
In particular, its existence and uniqueness is a direct consequence of the
existence and uniqueness of SVD for Hankel matrices of functions in $\lonerat$,
as we shall see in the first part of this section.
Furthermore, the algorithm given in Section~\ref{sec:compsva} for computing the
canonical form can also be interpreted as a procedure for computing the SVD of
an infinite Hankel matrix.

\subsection{Existence of the Canonical Form}\label{sec:svaexists}

A matrix $\mT \in \R^{\sstar \times \sstar}$ can be interpreted as the
expression of a (possibly unbounded) linear operator $T : \ltwo \to
\ltwo$ in terms of the canonical basis $\{\me_x\}_{x \in \sstar}$.
In the case of a Hankel matrix $\H_f$, the associated operator $H_f$ is called a
\emph{Hankel operator}, and corresponds to the convolution-like operation $(H_f
g)(x) = \sum_y f(x y) g(y)$ (assuming the series converges).

Recall the \emph{operator norm} of $T : \ltwo \to \ltwo$ is defined as
$\normop{T} = \sup_{\norm{f}_2 \leq 1} \norm{T f}_2$.
An operator is \emph{bounded} if $\normop{T}$ is finite.
Although not all Hankel operators are bounded, next lemma gives a sufficient
condition for $H_f$ to be bounded.

\begin{lemma}\label{lem:boundedhankel}
If $f \in \lone$, then $H_f$ is bounded.
\end{lemma}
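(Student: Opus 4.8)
The plan is to bound the operator norm of $H_f$ by exhibiting, for every $g \in \ltwo$, an estimate of the form $\norm{H_f g}_2 \leq C \norm{g}_2$ with $C = \norm{f}_1$ (or some explicit constant depending only on $\norm{f}_1$). First I would write out the action of the operator coordinatewise: $(H_f g)(x) = \sum_{y \in \sstar} f(xy) g(y) = \sum_{y} \H_f(x,y) g(y)$, so that bounding $\normop{H_f}$ amounts to controlling the $\ltwo \to \ltwo$ norm of the infinite matrix $\H_f$. The natural tool here is a Schur-test / Young-inequality style argument: since the Hankel structure means the entry $\H_f(x,y)$ depends only on the concatenation $xy$, each fixed value $f(w)$ is ``spread out'' along the anti-diagonal $\{(p,s) : ps = w\}$, which has exactly $|w|+1$ cells.

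The key steps, in order, would be: (1) fix $g$ with $\norm{g}_2 \leq 1$ and expand $\norm{H_f g}_2^2 = \sum_x \bigl(\sum_y f(xy) g(y)\bigr)^2$; (2) apply Cauchy--Schwarz inside the square in the form $\bigl(\sum_y f(xy) g(y)\bigr)^2 \leq \bigl(\sum_y |f(xy)|\bigr)\bigl(\sum_y |f(xy)| g(y)^2\bigr)$, using the nonnegative weights $|f(xy)|$; (3) observe that $\sum_y |f(xy)| \leq \sum_{w} |f(w)| = \norm{f}_1$ for every fixed $x$, since the strings $xy$ are distinct as $y$ ranges over $\sstar$; (4) substitute and swap the order of summation to get $\norm{H_f g}_2^2 \leq \norm{f}_1 \sum_y g(y)^2 \sum_x |f(xy)|$; (5) bound $\sum_x |f(xy)| \leq \norm{f}_1$ again (the strings $xy$ for fixed $y$ and varying $x$ are also all distinct), yielding $\norm{H_f g}_2^2 \leq \norm{f}_1^2 \norm{g}_2^2$. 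This gives $\normop{H_f} \leq \norm{f}_1$, which in particular is finite, so $H_f$ is bounded.

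The main obstacle — really the only delicate point — is making sure that all the manipulations with infinite sums are legitimate: that the series defining $(H_f g)(x)$ converges absolutely for each $x$ (which follows from $\sum_y |f(xy)| |g(y)| \leq \norm{f}_1^{1/2}\bigl(\sum_y |f(xy)| g(y)^2\bigr)^{1/2}$ being finite by step (2)–(3)), and that the interchange of summation order in step (4) is justified (which is Tonelli's theorem, since every term is nonnegative once we pass to absolute values). I would also note in passing that one could instead invoke Hölder's inequality as stated in the background section, since $f \in \lone \subset \ltwo$ gives $(H_f g)(x) = \langle f(x\,\cdot), g\rangle$ with $\norm{f(x\,\cdot)}_2 \leq \norm{f(x\,\cdot)}_1 \leq \norm{f}_1$, but the Schur-test argument above has the advantage of directly producing the clean bound $\normop{H_f} \leq \norm{f}_1$ rather than a weaker constant. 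Finally I would remark that the hypothesis $f \in \lone$ is genuinely used in both directions of the anti-diagonal summation, and that the bound is consistent with the finite-rank SVD picture developed later, since $\normop{H_f} = \sval_1$.
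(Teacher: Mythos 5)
Your proof is correct, but it takes a genuinely different route from the paper's. The paper fixes $g$ with $\norm{g}_2 = 1$, applies Cauchy--Schwarz row by row to get $\norm{H_f g}_2^2 \leq \sum_x \norm{f_x}_2^2$ with $f_x(y) = f(xy)$, reorganizes the resulting double sum as $\sum_z (1+|z|)f(z)^2$ by counting the $1+|z|$ prefix--suffix splits of each string $z$, and closes with H\"older's inequality against the auxiliary function $h(x) = 1+|x|$, yielding $\normop{H_f}^2 \leq \norm{f}_1 \norm{f \cdot h}_\infty$. You instead run a Schur test with weights $|f(xy)|$, using the injectivity of $y \mapsto xy$ for fixed $x$ and of $x \mapsto xy$ for fixed $y$ to bound each row sum and each column sum of $|\H_f|$ by $\norm{f}_1$, which gives the cleaner and stronger conclusion $\normop{H_f} \leq \norm{f}_1$. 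Your approach buys two things: an explicit operator-norm bound depending only on the hypothesis $\norm{f}_1 < \infty$, and independence from the paper's intermediate claim that $f \in \lone$ forces $\sup_x |f(x)|(1+|x|) < \infty$ --- a claim that is actually not valid for general $f \in \lone$ (summability does not control the pointwise decay rate of $f$ along a sparse set of strings, e.g.\ $f$ supported on $a^{2^m}$ with values $2^{-m/2}$), so the Schur-test argument is the more robust of the two. The convergence and interchange-of-summation issues you flag are exactly the right ones, and Tonelli handles them as you say.
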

\begin{proof}
Let $h(x) = 1 + |x|$ and note that $f \in \lone$ implies $\sup_x |f(x)| (1 +
|x|) < \infty$; i.e.\ $f \cdot h \in \linf$.
Now let $g \in \ltwo$ with $\norm{g}_2 = 1$ and for any $x \in \sstar$ define the
function $f_x(y) = f(x y)$.
Then we have
\begin{align*}
\norm{H_f g}_2^2 &= \sum_x \left(\sum_y f(x y) g(y)\right)^2
= \sum_x \langle f_x, g \rangle^2 \\
&\leq \norm{g}_2^2 \sum_x \norm{f_x}_2^2
= \sum_x \sum_y f(x y)^2 \\
&= \sum_z (1+|z|) f(z)^2
= \sum_z |f(z)| |(1+|z|) f(z)| \\
&\leq \norm{f}_1 \norm{f \cdot h}_{\infty} < \infty \enspace,
\end{align*}
where we used Cauchy--Schwarz inequality, that the number different ways to split a
string $z$ into a prefix and a suffix equals $1 + |z|$, and Hölder's
inequality. This concludes the proof.
\end{proof}

Theorem~\ref{lem:fundamentalWFA} and Lemma~\ref{lem:boundedhankel} imply
that, for any $f \in \lonerat$, the Hankel matrix $\H_f$ represents a bounded
finite-rank linear operator $H_f$ on the Hilbert space $\ltwo$.
Hence, $\H_f$ admits a reduced singular value decomposition $\H_f = \mU \mD
\mV^\top$ where $\mU, \mV \in \R^{\sstar \times n}$ and $\mD \in \R^{n \times
n}$ with $n = \rank(f)$.
The \emph{Hankel singular values} of a rational function $f \in \lonerat$ are
defined as the singular values of the Hankel matrix $\H_f$.
These singular values can be used to define a new set of norms on $\lonerat$:
the \emph{Schatten--Hankel $p$-norm} of $f \in \lonerat$ is given by $\normhp{f} =
\normsp{\H_f} = \norm{(\sval_1,\ldots,\sval_n)}_p$.
It is straightforward to verify that $\normhp{\cdot}$ satisfies the properties
of a norm.


Note an SVD of $\H_f$ yields a rank factorization given by $\H_f = (\mU
\mD^{1/2}) (\mV \mD^{1/2})^\top$.
But SVD is not the only way to obtain rank factorizations for Hankel matrices. In
fact, if $f$ is rational, then every minimal WFA $A$ realizing $f$ induces a
rank factorization of $\H_f$ as follows.
Let $\mP_A \in \R^{\sstar \times n}$ be the \emph{forward matrix} of $A$ given
by $\mP_A(p,:) = \azero^\top \A_p$ for any string $p \in \sstar$.
Similarly, let $\mS_A \in \R^{\sstar \times n}$ be the \emph{backward matrix} of
$A$ given by $\mS_A(s,:) = (\A_s \ainf)^\top$ for any string $s \in \sstar$.
Since $H_f(p,s) = f(p s) = \azero^\top \A_p \A_s \ainf = \mP_A(p,:)
\mS_A^\top(:,s)$, we obtain $\H_f = \mP_A \mS_A^\top$.
This is known as the \emph{forward--backward} (FB) rank factorization of $\H_f$
induced by $A$ \cite{mlj13spectral}.
The following result shows that among the infinity of minimal WFA
realizing a given rational function $f \in \lonerat$, there exists one whose
induced FB rank factorization coincides with $\H_f = (\mU \mD^{1/2}) (\mV
\mD^{1/2})^\top$.

\begin{theorem}\label{thm:sva}
Let $f \in \lonerat$ and suppose $\H_f = (\mU \mD^{1/2}) (\mV \mD^{1/2})^\top$
is a rank factorization induced by SVD.
Then there exists a minimal WFA $A$ for $f$ inducing the same rank
factorization. That is, $A$ induces a FB rank factorization of $\H_f$
given by $\mP_A = \mU \mD^{1/2}$ and $\mS_A = \mV \mD^{1/2}$.
\end{theorem}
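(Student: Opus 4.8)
The plan is to combine two standard facts: (i) any two rank factorizations of the same matrix are related by an invertible change of basis, and (ii) the forward and backward matrices of a WFA transform in a controlled way under conjugation by an invertible matrix. First I would pick an arbitrary minimal WFA $B = \wb$ realizing $f$; such a $B$ exists because $f$ is rational of rank $n$, and by Theorem~\ref{lem:fundamentalWFA} this $n$ coincides with $\rank(\H_f)$. Let $\H_f = \mP_B \mS_B^\top$ be the FB rank factorization induced by $B$, so $\mP_B, \mS_B \in \R^{\sstar \times n}$ have full column rank $n$. On the other hand, since $\mU$ has orthonormal columns and $\mD^{1/2} = \diag(\sval_1^{1/2},\ldots,\sval_n^{1/2})$ is invertible (all Hankel singular values are strictly positive), both $\mU\mD^{1/2}$ and $\mV\mD^{1/2}$ also have full column rank $n$; hence $\H_f = (\mU\mD^{1/2})(\mV\mD^{1/2})^\top$ is a second rank factorization of the same bi-infinite matrix.

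Next I would identify the change of basis relating the two factorizations. In any rank factorization $\mM = \mQ\mR$ one has $\operatorname{col}(\mM) = \operatorname{col}(\mQ)$: the inclusion $\subseteq$ is immediate, and both subspaces have dimension $\rank(\mM)$. Applied to the two factorizations above, this gives $\operatorname{col}(\mP_B) = \operatorname{col}(\H_f) = \operatorname{col}(\mU\mD^{1/2})$, a fixed $n$-dimensional subspace of $\R^{\sstar}$. Since $\mP_B$ and $\mU\mD^{1/2}$ are injective linear maps $\R^n \to \R^{\sstar}$ with the same image, there is a unique invertible $\mQ \in \R^{n\times n}$ with $\mU\mD^{1/2} = \mP_B\mQ$. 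Substituting into $\mP_B \mS_B^\top = \H_f = (\mP_B\mQ)(\mV\mD^{1/2})^\top$ and cancelling $\mP_B$ on the left (it has full column rank, hence is left-cancellable) yields $(\mV\mD^{1/2})^\top = \mQ^{-1}\mS_B^\top$, equivalently $\mV\mD^{1/2} = \mS_B\mQ^{-\top}$.

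Finally I would take $A = \mQ^{-1} B \mQ$ as in \eqref{eqn:wfaconjugation}. Then $f_A = f_B = f$ and $\dim(A) = n = \rank(f)$, so $A$ is a minimal WFA for $f$. It remains to check that conjugation acts on the forward and backward matrices as expected. Writing $p = p_1\cdots p_t$ and $s = s_1 \cdots s_{t'}$, a telescoping computation gives $\mP_A(p,:) = (\mQ^\top\bzero)^\top (\mQ^{-1}\B_{p_1}\mQ)\cdots(\mQ^{-1}\B_{p_t}\mQ) = \bzero^\top \B_{p_1}\cdots\B_{p_t}\mQ = \mP_B(p,:)\mQ$, and likewise $\mS_A(s,:) = (\mQ^{-1}\B_{s_1}\cdots\B_{s_{t'}}\binf)^\top = \mS_B(s,:)\mQ^{-\top}$. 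Hence $\mP_A = \mP_B\mQ = \mU\mD^{1/2}$ and $\mS_A = \mS_B\mQ^{-\top} = \mV\mD^{1/2}$, which is exactly the claim.

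The argument is short; the only point requiring a moment's care is that $\H_f$, $\mP_B$, $\mU$, and $\mV$ are matrices with a countably infinite row index $\sstar$, so one cannot invoke finite-dimensional matrix identities blindly. However, everything used here only concerns the column spaces, which are genuinely finite-dimensional, so the relevant facts (full-rank factors of a rank factorization, uniqueness of the intertwiner up to invertibility, left-cancellation by a matrix of full column rank) carry over verbatim. I expect this bookkeeping to be the main, and only mild, obstacle.
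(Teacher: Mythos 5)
Your proposal is correct and follows essentially the same route as the paper: the paper also reduces Theorem~\ref{thm:sva} to the fact that any rank factorization of $\H_f$ can be realized as the FB factorization of some minimal WFA (its Lemma~\ref{lem:duality}), obtained by conjugating an arbitrary minimal WFA $B$ by the change of basis $\mQ$ relating $\mP_B$ to the prescribed left factor. Your additional bookkeeping (full column rank of $\mU\mD^{1/2}$, the explicit transformation of $\mP$ and $\mS$ under conjugation, and the caveat about infinite row indices) fills in details the paper leaves implicit but does not change the argument.
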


Since we have already established the existence of an SVD for $\H_f$ whenever $f
\in \lonerat$, the theorem is just a direct application of the following lemma.

\begin{lemma}\label{lem:duality}
Suppose $f \in \lonerat$ and $\H_f = \mP \mS^\top$ is a rank factorization. Then
there exists a minimal WFA $A$ realizing $f$ which induces this factorization.
\end{lemma}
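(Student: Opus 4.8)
The plan is to exploit the Hankel structure explicitly and reconstruct the WFA operators from the two halves $\mP$ and $\mS$ of the given rank factorization. The key observation is that the Hankel property of $\H_f$ relates blocks of $\H_f$ obtained by prepending a symbol $\sigma$ to the prefix (equivalently, appending it to the suffix). Concretely, for each $\sigma \in \Sigma$ let $\H_\sigma \in \R^{\sstar \times \sstar}$ be the ``shifted'' matrix $\H_\sigma(p,s) = f(p\sigma s) = \H_f(p\sigma, s) = \H_f(p, \sigma s)$. Then $\H_\sigma = \mP \A_\sigma \mS^\top$ should hold for the sought transition matrix $\A_\sigma$, and since $\mP$ has full column rank $n$ (it has a left inverse $\mP^+$) and $\mS^\top$ has full row rank (right inverse $(\mS^\top)^+$), we can simply \emph{define}
\begin{equation*}
\A_\sigma = \mP^+ \H_\sigma (\mS^\top)^+ \enspace,
\end{equation*}
and analogously set $\azero^\top = \mP(\lambda,:)$ (the row of $\mP$ indexed by the empty string, using $\H_f = \mP\mS^\top$ restricted to the $\lambda$ row) and $\ainf = \mS(\lambda,:)^\top$. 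One must first check these objects are well defined: a rank factorization of a finite-rank bi-infinite matrix still has each factor of full rank $n$, so one-sided (Moore--Penrose) inverses exist; and the identity $\mP^+\mP = \mI_n = \mS^\top(\mS^\top)^+$ makes the definitions unambiguous.

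The heart of the argument is then to verify that the WFA $A = \waS$ so defined realizes $f$ and induces the factorization $\mP_A = \mP$, $\mS_A = \mS$. First I would prove by induction on $|x|$ that $\mP \A_x \mS^\top = \H_x$, where $\H_x(p,s) = f(pxs)$: the base case $x = \lambda$ is the hypothesis $\H_f = \mP\mS^\top$, and the inductive step uses $\H_{x\sigma} = \mP\A_x\A_\sigma\mS^\top$, which follows once one shows that $\mP\A_x$ equals the ``prefix-shifted'' block of $\mP$, i.e. the matrix whose $p$-row is $\mP(px,:)$ — call it $\mP^{(x)}$. The cleanest route is to establish the two auxiliary identities $\mP\A_\sigma = \mP^{(\sigma)}$ and $\A_\sigma\mS^\top = (\mS^\top)^{(\sigma)}$ (suffix-shift) directly from the Hankel property: for instance $\mP\A_\sigma = \mP\mP^+\H_\sigma(\mS^\top)^+$, and since every row of $\H_\sigma$ lies in the row space of $\mS^\top$ (because $\H_\sigma$ is a sub-pattern of the finite-rank $\H_f$ and shares its suffix-indexed column space — more precisely $\H_\sigma = \mP^{(\sigma)}\mS^\top$ would be circular, so instead one argues $\H_\sigma$ and $\H_f$ have the same column space, whence $\H_\sigma(\mS^\top)^+\mS^\top = \H_\sigma$), the factor $\mP\mP^+$ acts as identity on the relevant column space. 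Chaining these shift identities gives $\mP\A_x = \mP^{(x)}$ for all $x$, and then $f_A(x) = \azero^\top\A_x\ainf = \mP(\lambda,:)\A_x\mS(\lambda,:)^\top = \mP^{(x)}(\lambda,:)\,\mS(\lambda,:)^\top = (\mP\mS^\top)^{(x)}(\lambda,\lambda) = \H_x(\lambda,\lambda) = f(x)$, as required; and $\mP_A(p,:) = \azero^\top\A_p = \mP(p,:)$ shows $\mP_A = \mP$, similarly $\mS_A = \mS$. Minimality of $A$ is then automatic: $\dim(A) = n = \rank(\H_f) = \rank(f_A)$ by Theorem~\ref{lem:fundamentalWFA}.

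The main obstacle I anticipate is handling the shift/column-space bookkeeping rigorously in the bi-infinite setting: one must be careful that ``column space of $\H_\sigma$ $\subseteq$ column space of $\H_f = \mP\mS^\top$'' genuinely holds — this is where the Hankel property does real work, since $\H_\sigma(p,s) = \H_f(p,\sigma s)$ says every column of $\H_\sigma$ (indexed by $s$) is literally a column of $\H_f$ (the one indexed by $\sigma s$), so the inclusion is immediate and in fact the column spaces coincide. Dually every row of $\H_\sigma$ is a row of $\H_f$. Making the Moore--Penrose pseudoinverse manipulations valid for operators (rather than finite matrices) requires only that $\mP$, $\mS$ have finite rank with closed range, which holds because their range is an $n$-dimensional subspace; alternatively, one can sidestep all functional-analytic concerns by passing to any finite sub-block of $\H_f$ that already has full rank $n$ (a complete basis of prefixes/suffixes exists since $\rank f = n$), performing the construction there, and checking the resulting finite-dimensional operators reproduce $f$ on all of $\sstar$ via the induction above. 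I would likely adopt this finite-sub-block formulation to keep the proof elementary.
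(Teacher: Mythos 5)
Your construction is correct, but it takes a genuinely different route from the paper. The paper's proof is a three-line conjugation argument: it takes \emph{any} minimal WFA $B$ realizing $f$ (whose existence is already guaranteed since $f$ is rational), observes that $\mP\mS^\top = \mP_B\mS_B^\top$ are two rank factorizations of the same matrix, hence related by an invertible change of basis $\mQ$ with $\mP_B\mQ = \mP$ and $\mS^\top = \mQ^{-1}\mS_B^\top$, and sets $A = \mQ^{-1}B\mQ$. You instead rebuild the automaton from scratch out of the factors, via $\A_\sigma = \mP^+\H_\sigma(\mS^\top)^+$ and the shift identities $\mP\A_\sigma = \mP^{(\sigma)}$ --- essentially the Fliess-style Hankel reconstruction underlying the spectral learning algorithms the paper cites. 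Your route is self-contained (it does not presuppose a minimal realization and in effect re-proves one direction of Theorem~\ref{lem:fundamentalWFA}) and it is the argument one would need if only the factorization, not $f$'s rationality, were given; the price is all the column-space and pseudoinverse bookkeeping in the bi-infinite setting, which the conjugation argument avoids entirely. One small simplification: the identity $\H_\sigma = \mP^{(\sigma)}\mS^\top$ is not circular --- it follows in one line from $\H_\sigma(p,s) = \H_f(p\sigma,s) = \mP(p\sigma,:)\,\mS(s,:)^\top$ --- and combining it with a right inverse of $\mS^\top$ gives $\operatorname{col}(\mP^{(\sigma)}) \subseteq \operatorname{col}(\H_\sigma) \subseteq \operatorname{col}(\H_f) = \operatorname{col}(\mP)$ directly, so you can define $\A_\sigma$ by full-rank cancellation and skip the Moore--Penrose machinery altogether.
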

\begin{proof}
Let $B$ be any minimal WFA realizing $f$ and denote $n = \rank(f)$. Then we have
two rank factorizations $\mP \mS^\top = \mP_B \mS_B^\top$ for the Hankel matrix
$\H_f$.
Therefore, the columns of $\mP$ and $\mP_B$ both span the same $n$-dimensional
sub-space of $\R^{\sstar}$, and there exists a change of basis $\mQ \in \R^{n
\times n}$ such that $\mP_B \mQ = \mP$. This implies we must also have
$\mS^\top = \mQ^{-1} \mS_B^\top$.
It follows that $A = \mQ^{-1} B \mQ$ is a minimal WFA for
$f$ inducing the desired rank factorization.
\end{proof}

The results above leads us to our first contribution: the definition of a
canonical form for WFA realizing functions in $\lonerat$.

\begin{definition}
Let $f \in \lonerat$. A \emph{singular value automaton} (SVA) for $f$ is a
minimal WFA $A$ realizing $f$ such that the FB rank factorization of $\H_f$
induced by $A$ has the form given in Theorem~\ref{thm:sva}.
\end{definition}

Note the SVA provided by Theorem~\ref{thm:sva} is unique up to
the same conditions in which SVD is unique. In particular, it is easy to verify
that if the Hankel singular values of $f \in \lonerat$ satisfy the strict
inequalities $\sval_1 > \cdots > \sval_n$, then the transition weights of the
SVA $A$ of $f$ are uniquely defined, and the initial and final weights are
uniquely defined up to sign changes.

Then next subsection gives a polynomial-time algorithm for computing the SVA of
a function $f \in \lonerat$ starting from a WFA realizing $f$.

\subsection{Computing the Canonical Form}\label{sec:compsva}

As we have seen above, a bi-infinite Hankel matrix $\H_f$ of rank $n$ can
actually be represented with the $n (2 + kn)$ parameters needed to specify the
initial, final and transtion weights of a minimal WFA $A$ realizing $f$.
Though in principle $A$ contains enough information to reconstruct $\H_f$, a
priori it is not clear that $A$ provides an efficient representation for
operating on $\H_f$.
Luckily, it turns out WFA possess a rich algebraic structure allowing
many operations on rational functions and their corresponding Hankel matrices
to be performed in ``compressed'' form by operating directly on WFA
representing them \cite{berstel2011noncommutative}.
In this section we show it is also possible to compute the SVD of $\H_f$ by
operating on a minimal WFA realizing $f$; that is, we give an algorithm for
computing SVA representations.

%
%
%

We start with a simple linear algebra fact showing how to leverage a rank
factorization of a given matrix in order to compute its reduced SVD.
Let $\mM \in \R^{p \times s}$ be a matrix of rank $n$ and suppose $\mM = \mP
\mS^\top$ is a rank factorization.
Let $\mG_p = \mP^\top \mP \in \R^{n \times n}$ be the Gram matrix of the columns
of $\mP$. Since $\mG_p$ is positive definite, it admits a spectral
decomposition $\mG_p = \mV_p \mD_p \mV_p^\top$.
Similarly, we have $\mG_s = \mS^\top \mS = \mV_s \mD_s \mV_s^\top$.
With this notation we have the following.

\begin{lemma}\label{lem:rankfactSVD}
Let $\tilde{\mM} = \mD_p^{1/2} \mV_p^\top \mV_s \mD_s^{1/2}$ with reduced SVD
$\tilde{\mM} = \tilde{\mU} \tilde{\mD} \tilde{\mV}^\top$.
If $\mQ_p = \mV_p \mD_p^{-1/2} \tilde{\mU}$, $\mU = \mP \mQ_p$, $\mQ_s = \mV_s
\mD_s^{-1/2} \tilde{\mV}$, $\mV = \mS \mQ_s$, and $\mD = \tilde{\mD}$, then $\mM
= \mU \mD \mV^\top$ is a reduced SVD for $\mM$.
\end{lemma}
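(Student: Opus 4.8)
The plan is to verify directly that the claimed $\mU$, $\mD$, $\mV$ satisfy the three defining properties of a reduced SVD of $\mM$: (i) $\mM = \mU \mD \mV^\top$; (ii) $\mU^\top \mU = \mV^\top \mV = \mI$; and (iii) $\mD$ is diagonal with strictly positive, nonincreasing entries. Property (iii) is immediate, since $\mD = \tilde{\mD}$ comes from the reduced SVD of $\tilde{\mM}$; moreover $\tilde{\mM}$ has rank $n$ because $\mP$ and $\mS$ have full column rank $n$, so $\mV_p \mD_p^{1/2}$ and $\mD_s^{1/2} \mV_s^\top$ are invertible, whence $\tilde{\mD}$ has exactly $n$ positive diagonal entries. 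So the real work is (i) and (ii).

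For (i), I would substitute the definitions and collapse the telescoping products. Write $\mU \mD \mV^\top = \mP \mQ_p \tilde{\mD} \mQ_s^\top \mS^\top$, and expand $\mQ_p \tilde{\mD} \mQ_s^\top = \mV_p \mD_p^{-1/2} \tilde{\mU} \tilde{\mD} \tilde{\mV}^\top \mD_s^{-1/2} \mV_s^\top$. The middle factor $\tilde{\mU} \tilde{\mD} \tilde{\mV}^\top$ equals $\tilde{\mM} = \mD_p^{1/2} \mV_p^\top \mV_s \mD_s^{1/2}$ by definition, so $\mQ_p \tilde{\mD} \mQ_s^\top = \mV_p \mD_p^{-1/2} \mD_p^{1/2} \mV_p^\top \mV_s \mD_s^{1/2} \mD_s^{-1/2} \mV_s^\top$. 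Now using $\mD_p^{-1/2} \mD_p^{1/2} = \mI$ and the orthogonality $\mV_p^\top \mV_p = \mV_s^\top \mV_s = \mI$ from the spectral decompositions of the Gram matrices, and then $\mV_p \mV_p^\top$ acting as the identity on the column space appropriately — actually more carefully, $\mV_p$ is $n \times n$ orthogonal so $\mV_p \mV_p^\top = \mI$ as well — the whole thing collapses to $\mV_s \mV_s^\top = \mI$, giving $\mU \mD \mV^\top = \mP \mS^\top = \mM$.

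For (ii), compute $\mU^\top \mU = \mQ_p^\top \mP^\top \mP \mQ_p = \mQ_p^\top \mG_p \mQ_p$. Substituting $\mQ_p = \mV_p \mD_p^{-1/2} \tilde{\mU}$ and $\mG_p = \mV_p \mD_p \mV_p^\top$ yields $\tilde{\mU}^\top \mD_p^{-1/2} \mV_p^\top \mV_p \mD_p \mV_p^\top \mV_p \mD_p^{-1/2} \tilde{\mU} = \tilde{\mU}^\top \mD_p^{-1/2} \mD_p \mD_p^{-1/2} \tilde{\mU} = \tilde{\mU}^\top \tilde{\mU} = \mI$, using $\mV_p^\top \mV_p = \mI$ repeatedly and then the orthogonality of $\tilde{\mU}$. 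The computation for $\mV^\top \mV = \mI$ is identical with the subscript $s$ in place of $p$ and $\tilde{\mV}$ in place of $\tilde{\mU}$. I do not anticipate a genuine obstacle here — the argument is a bookkeeping exercise in cancelling orthogonal factors — but the one point that deserves care is making sure the matrices $\mD_p$, $\mD_s$ are genuinely invertible (equivalently, that $\mP$, $\mS$ have full column rank $n$, which is exactly what "rank factorization" guarantees) so that $\mD_p^{-1/2}$, $\mD_s^{-1/2}$ are well-defined; and keeping straight that all the $\mV$'s here are square orthogonal $n \times n$ matrices, so both $\mV^\top \mV = \mI$ and $\mV \mV^\top = \mI$ hold, which is what makes the telescoping in (i) go through.
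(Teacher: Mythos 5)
Your proposal is correct and follows essentially the same route as the paper's proof: verify $\mU^\top\mU=\mV^\top\mV=\mI$ by cancelling the orthogonal factors $\mV_p,\mV_s$ against the spectral decompositions of the Gram matrices, and recover $\mM=\mU\mD\mV^\top$ by substituting $\tilde{\mU}\tilde{\mD}\tilde{\mV}^\top=\tilde{\mM}$ and telescoping. The only additions are your explicit checks that $\mD_p,\mD_s$ are invertible and that $\tilde{\mD}$ has exactly $n$ positive entries, which the paper leaves implicit.
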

\begin{proof}
We just need to check the columns of $\mU$ and $\mV$ are orthonormal, and
$\mM = \mU \mD \mV^\top$:
\begin{align*}
\mU^\top \mU
&= \mQ_p^\top \mP^\top \mP \mQ_p \\
&= \tilde{\mU}^\top \mD_p^{-1/2} \mV_p^\top \mG_p \mV_p \mD_p^{-1/2} \tilde{\mU} \\
&= \tilde{\mU}^\top \mD_p^{-1/2} \mV_p^\top \mV_p \mD_p \mV_p^\top \mV_p \mD_p^{-1/2} \tilde{\mU} \\
&= \tilde{\mU}^\top \tilde{\mU} \\
&= \mI \enspace, \\
\mV^\top \mV
&= \mQ_s^\top \mS^\top \mS \mQ_s \\
&= \tilde{\mV}^\top \mD_s^{-1/2} \mV_s^\top \mG_s \mV_s \mD_s^{-1/2} \tilde{\mV} \\
&= \tilde{\mV}^\top \mD_s^{-1/2} \mV_s^\top \mV_s \mD_s \mV_s^\top \mV_s \mD_s^{-1/2} \tilde{\mV} \\
&= \tilde{\mV}^\top \tilde{\mV} \\
&= \mI \enspace, \\
\mU \mD \mV^\top
&= \mP \mQ_p \mD \mQ_s^\top \mS^\top \\
&= \mP \mV_p \mD_p^{-1/2} \tilde{\mU} \tilde{\mD} \tilde{\mV}^\top \mD_s^{-1/2} \mV_s^\top \mS^\top \\
&= \mP \mV_p \mD_p^{-1/2} \tilde{\mM} \mD_s^{-1/2} \mV_s^\top \mS^\top \\
&= \mP \mV_p \mD_p^{-1/2} \mD_p^{1/2} \mV_p^\top \mV_s \mD_s^{1/2} \mD_s^{-1/2} \mV_s^\top \mS^\top \\
&= \mP \mS^\top \\
&= \mM \enspace.
\end{align*}
\end{proof}

Note the above result we does not require $p$ and $s$ to be finite.
In particular, when $\mM$ is an infinite matrix associated with a finite-rank
bounded operator, the computation of $\mQ_p$ and $\mQ_s$ can still be done
efficiently as long as $\mG_p$ and $\mG_s$ are available.

Our goal now is to leverage this result in order to compute the SVD of the
bi-infinite Hankel matrix $\H_f$ associated with a rational function $f \in
\lonerat$. The key step will be to compute the Gram matrices associated with
the rank factorization induced by a minimal WFA for $f$.
We start with a lemma showing how to compute the inner product between two
rational functions.

\begin{lemma}\label{lem:innerprod}
Let $A = \wa$ and $B = \wb$ be minimal WFA realizing functions $f_A, f_B \in
\ltworat$.
Suppose the spectral radius of the matrix $\C = \sum_\sigma \A_\sigma \otimes
\B_\sigma$ satisfies $\rho(\C) < 1$.
Then the inner product between $f_A$ and $f_B$ can be computed as:
\begin{equation*}
\langle f_A, f_B \rangle =
\left(\azero \otimes \bzero\right)^\top
\left(\mI - \C \right)^{-1}
\left(\ainf \otimes \binf \right) \enspace.
\end{equation*}
\end{lemma}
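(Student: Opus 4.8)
The plan is to expand the inner product $\langle f_A, f_B \rangle = \sum_{x \in \sstar} f_A(x) f_B(x)$ directly in terms of the automata and recognize a Neumann series. Writing $f_A(x) = \azero^\top \A_x \ainf$ and $f_B(x) = \bzero^\top \B_x \binf$, and using the identity $(\mathbf{u}^\top \mathbf{v})(\mathbf{u}'^\top \mathbf{v}') = (\mathbf{u} \otimes \mathbf{u}')^\top (\mathbf{v} \otimes \mathbf{v}')$ together with the mixed-product property $(\A_x \otimes \B_x) = \bigotimes$ over the symbols, one gets for a string $x = x_1 \cdots x_t$ that $f_A(x) f_B(x) = (\azero \otimes \bzero)^\top (\A_{x_1} \otimes \B_{x_1}) \cdots (\A_{x_t} \otimes \B_{x_t}) (\ainf \otimes \binf)$. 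Summing over all strings of length $t$ collapses each factor $\sum_{\sigma} \A_\sigma \otimes \B_\sigma = \C$, so $\sum_{|x| = t} f_A(x) f_B(x) = (\azero \otimes \bzero)^\top \C^t (\ainf \otimes \binf)$.

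Next I would sum over $t \geq 0$. Since $\rho(\C) < 1$ by hypothesis, the matrix series $\sum_{t \geq 0} \C^t$ converges to $(\mI - \C)^{-1}$ (this is exactly the fact recalled in Section~\ref{sec:background} about $\sum_{k \geq 0} \mM^k$). Hence, if the interchange of the two summations is legitimate,
\begin{equation*}
\langle f_A, f_B \rangle = \sum_{t \geq 0} (\azero \otimes \bzero)^\top \C^t (\ainf \otimes \binf) = (\azero \otimes \bzero)^\top (\mI - \C)^{-1} (\ainf \otimes \binf) \enspace,
\end{equation*}
which is the claimed formula.

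The main obstacle is justifying that $\sum_{x} f_A(x) f_B(x)$ is absolutely convergent and that reorganizing it by length and then invoking the Neumann series is valid — i.e.\ that the formal manipulation above is an honest identity of real numbers and not merely of formal power series. Since $f_A, f_B \in \ltworat$, Cauchy--Schwarz gives $\sum_x |f_A(x) f_B(x)| \leq \norm{f_A}_2 \norm{f_B}_2 < \infty$, so the sum defining $\langle f_A, f_B \rangle$ converges absolutely and may be reordered freely; in particular it equals $\sum_{t \geq 0} \big( \sum_{|x| = t} f_A(x) f_B(x) \big)$. To pass from this to $\sum_{t \geq 0} (\azero \otimes \bzero)^\top \C^t (\ainf \otimes \binf)$ and then to the closed form, I would note that $\rho(\C) < 1$ forces $\C^t \to 0$ geometrically, so the partial sums $\sum_{t=0}^{T} \C^t$ converge (in any matrix norm) to $(\mI - \C)^{-1}$, and continuity of the bilinear map $\mathbf{v} \mapsto (\azero \otimes \bzero)^\top \mathbf{v} (\ainf \otimes \binf)$ transfers this limit to the scalar series. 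One small point worth spelling out is the base case $t = 0$: the empty string contributes $f_A(\lambda) f_B(\lambda) = (\azero^\top \ainf)(\bzero^\top \binf) = (\azero \otimes \bzero)^\top (\ainf \otimes \binf)$, matching $\C^0 = \mI$. With these observations the proof is essentially a bookkeeping exercise once the Kronecker identities are in place.
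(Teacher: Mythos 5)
Your proposal is correct and follows essentially the same route as the paper's proof: both construct the Kronecker-product automaton with transition matrices $\A_\sigma \otimes \B_\sigma$, use Cauchy--Schwarz/H\"older to establish absolute convergence of $\sum_x f_A(x) f_B(x)$ so the sum can be reorganized by string length, and then invoke the Neumann series $\sum_{t \geq 0} \C^t = (\mI - \C)^{-1}$ under the hypothesis $\rho(\C) < 1$. The only cosmetic difference is that you unfold the product automaton via the mixed-product property inline rather than naming it as a WFA for $f_A \cdot f_B$.
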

\begin{proof}
First note $g(x) = f_A(x) f_B(x)$ is in $\lone$ by Hölder's inequality.
Therefore $\langle f_A, f_B \rangle = \sum_x g(x) \leq \sum_x |g(x)| < \infty$.
In addition, $g$ is rational \cite{berstel2011noncommutative} and can be
computed by the WFA $C = \wc$ given by
\begin{align*}
\czero &= \azero \otimes \bzero \enspace, \\
\cinf &= \ainf \otimes \binf \enspace, \\
\C_\sigma &= \A_\sigma \otimes \B_\sigma \enspace.
\end{align*}
Now note one can use a simple induction argument to show that for any finite $t
\geq 0$ we have
\begin{equation*}\label{eqn:Ck}
s_t = \sum_{x \in \Sigma^t} \czero^\top \C_x \cinf = \czero^\top \C^t \cinf \enspace.
\end{equation*}
Because $g \in \lone$, the series $\sum_{t \geq 0} s_t$ is absolutely
convergent. Thus we must have $\lim_{k \to \infty} \sum_{t \leq k} s_t = L$ for
some finite $L \in \R$.
Since $\rho(\C) < 1$ implies the identity $\sum_{t \geq 0} \C^t = (\mI -
\C)^{-1}$, we must necessarily have $L = \czero^\top (\mI - \C)^{-1} \cinf$.
%
\end{proof}

Note the assumption $\rho(\C) < 1$ is an essential part of our
calculations.
We shall make similar assumptions in the remaining of this
section. See Section~\ref{sec:spectrad} for a discussion on this
assumption and how to remove it.


The following result shows how to efficiently compute the Gram matrices
associated with the rank factorization induced by a minimal WFA for a function
$f \in \lonerat$.

\begin{lemma}\label{lem:computeGpGs}
Let $f \in \lonerat$ with $\rank(f) = n$, and $A = \wa$ be a minimal WFA for
$f$ inducing the FB rank factorization $\H_f = \mP \mS^\top$.
Let $\A^\otimes = \sum_{\sigma} \A_\sigma^{\otimes 2}$.
%
If $\rho(\A^\otimes) < 1$, then
the Gram matrices $\mG_p, \mG_s \in \R^{n \times n}$ associated with the
factorization induced by $A$ satisfy $\vecm(\mG_p)^\top = (\azero^{\otimes
2})^\top (\mI - \A^\otimes)^{-1}$ and $\vecm(\mG_s) = (\mI - \A^\otimes)^{-1}
\ainf^{\otimes 2}$.
\end{lemma}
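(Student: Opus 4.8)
The plan is to recognize both Gram matrices as sums over strings that can be packaged as a Neumann series of the Kronecker-squared transition operator $\A^\otimes$, exactly in the spirit of Lemma~\ref{lem:innerprod} but applied to $f$ against itself on the prefix side and the suffix side separately. First I would recall that the FB rank factorization induced by $A = \wa$ has rows $\mP(p,:) = \azero^\top \A_p$ and $\mS(s,:) = (\A_s \ainf)^\top$. Hence the $(i,j)$ entry of $\mG_p = \mP^\top \mP$ is $\sum_{p \in \sstar} (\azero^\top \A_p)_i (\azero^\top \A_p)_j$, and similarly $\mG_s = \mS^\top \mS$ has $(i,j)$ entry $\sum_{s \in \sstar} (\A_s \ainf)_i (\A_s \ainf)_j$. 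The key algebraic observation is that for each fixed string $x$ of length $t$, $\A_x \otimes \A_x = (\A_{x_1} \otimes \A_{x_1}) \cdots (\A_{x_t} \otimes \A_{x_t})$ by the mixed-product property of the Kronecker product, so summing over all $x \in \Sigma^t$ gives $\sum_{|x|=t} \A_x^{\otimes 2} = (\A^\otimes)^t$, by the same induction used in Lemma~\ref{lem:innerprod}.

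Next I would vectorize. Using the identity $\vecm(\mat{a}\mat{b}^\top) = \mat{b} \otimes \mat{a}$ (or the appropriate variant matching the paper's $\vecm$ convention), the contribution of a single prefix $p$ to $\mG_p$ is the rank-one matrix $(\A_p^\top \azero)(\A_p^\top \azero)^\top$, whose vectorization is $(\A_p^\top \azero)^{\otimes 2} = (\A_p \otimes \A_p)^\top \azero^{\otimes 2}$. Summing over $p \in \Sigma^t$ and then over $t \geq 0$ yields
\begin{equation*}
\vecm(\mG_p) = \sum_{t \geq 0} \left( (\A^\otimes)^t \right)^\top \azero^{\otimes 2} = \left( \sum_{t \geq 0} (\A^\otimes)^t \right)^\top \azero^{\otimes 2} \enspace,
\end{equation*}
and transposing gives $\vecm(\mG_p)^\top = (\azero^{\otimes 2})^\top \sum_{t \geq 0} (\A^\otimes)^t$. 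The same computation on the suffix side, now with the rank-one pieces $(\A_s \ainf)(\A_s \ainf)^\top$ vectorizing to $(\A_s \otimes \A_s)\ainf^{\otimes 2}$, gives $\vecm(\mG_s) = \sum_{t \geq 0} (\A^\otimes)^t \, \ainf^{\otimes 2}$. Invoking the hypothesis $\rho(\A^\otimes) < 1$, the Neumann series converges to $(\mI - \A^\otimes)^{-1}$, which produces the claimed formulas.

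The one point requiring care — and the main obstacle — is justifying the interchange of summations and the convergence of the string-indexed sums defining $\mG_p$ and $\mG_s$ in the first place. Here I would lean on the fact that $f \in \lonerat$ together with Lemma~\ref{lem:boundedhankel}: the Hankel operator $H_f$ is bounded and finite-rank, so $\H_f = \mP \mS^\top$ with $\mP, \mS \in \R^{\sstar \times n}$ having finite-norm columns (equivalently, $\mG_p$ and $\mG_s$ are well-defined finite matrices). Absolute convergence of the entrywise series then licenses regrouping the sum by string length and interchanging with the (finite-dimensional, hence continuous) vectorization and transpose operations; the spectral radius condition $\rho(\A^\otimes) < 1$ takes care of the tail so that $\sum_t (\A^\otimes)^t$ converges to $(\mI - \A^\otimes)^{-1}$ in the matrix norm. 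I would also double-check the precise placement of factors in the Kronecker/vec identities against the paper's stated conventions for $\vecm$ and $\otimes$, since an index transposition there would swap a $\mat{b}\otimes\mat{a}$ for $\mat{a}\otimes\mat{b}$ and alter the final expression.
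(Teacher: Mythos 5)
Your proposal is correct and follows essentially the same route as the paper: the paper applies Lemma~\ref{lem:innerprod} entrywise to the coordinate functions $p_i$ realized by the WFA $\langle \azero, \me_i, \{\A_\sigma\} \rangle$ (justifying $p_i \in \ltworat$ via the columns of $\mP$ lying in the span of the left singular vectors of $\H_f$), which is exactly the Kronecker mixed-product and Neumann-series computation you carry out inline. The only difference is presentational — you re-derive the content of Lemma~\ref{lem:innerprod} and vectorize all entries at once rather than citing it per entry.
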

\begin{proof}
For $i \in [n]$ let $\mp_i = \mP(:,i) \in \R^{\sstar}$ be the $i$th column of
$\mP$.
The key observation is that the function $p_i : \sstar \to \R$ defined by
$p_i(x) = \mp_i(x)$ is in $\ltworat$. To show rationality one just needs to
check $p_i$ is the function realized by the WFA $A_{p,i} = \langle \azero,
\me_i, \{\A_\sigma\} \rangle$ by construction of the induced rank
factorization.
The fact that $\norm{p_i}_2$ is finite follows from Theorem~\ref{thm:sva} by
noting that $\mp_i$ is a linear combination of left singular vectors of
$\H_f$, which belong to $\ltwo$ by definition.
Thus, $\mG_p(i,j) = \mp_i^\top \mp_j$ is well-defined and corresponds to the
inner product $\langle p_i, p_j \rangle$ which, by Lemma~\ref{lem:innerprod}
can be computed as \begin{equation*}
(\azero^{\otimes 2})^\top \left(\mI - \A^\otimes \right)^{-1} (\me_i \otimes \me_j) \enspace.
\end{equation*}
Since $\me_i \otimes \me_j = \me_{(i-1) n + j}$, we obtain the
desired expression for $\vecm(\mG_p)$.
The expression for $\vecm(\mG_s)$ follows from an identical argument using
automata $\A_{s,i} = \langle \me_i, \ainf, \{\A_\sigma\} \rangle$.
\end{proof}

Combining the results above we now show it is possible to compute an
SVA for $f \in \lonerat$ starting from a minimal WFA realizing $f$.
The procedure is called \FuncSty{ComputeSVA} and its description is given in
Algorithm~\ref{alg:sva}.

\begin{algorithm}
\label{alg:sva}
\DontPrintSemicolon
\caption{\FuncSty{ComputeSVA}}
\KwIn{A minimal WFA $A$ with $n$ states for $f \in \lonerat$}
\KwOut{An SVA $A'$ for $f$}
Compute $\mG_p$ and $\mG_s$ \tcc*{cf. Lemma~\ref{lem:computeGpGs}}
Compute $\mQ_p$, $\mQ_s$, and $\mD$ \tcc*{cf.
Lemma~\ref{lem:rankfactSVD}}
Let $A' = \mD^{1/2} \mQ_s^\top A \mQ_p \mD^{1/2}$ \tcc*{cf.
Eq.~(\ref{eqn:wfaconjugation})}
\KwRet{$A'$}\;
\end{algorithm}

\begin{theorem}\label{thm:algosva}
Let $A = \wa$ be a minimal WFA for $f$ such that $\A^\otimes = \sum_\sigma
\A_\sigma^{\otimes 2}$ satisfies $\rho(\A^\otimes) < 1$.
Then the WFA $A'$ computed by $\FuncSty{ComputeSVA}(A)$ is an SVA for $f$.
\end{theorem}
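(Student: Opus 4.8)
The plan is to verify directly that the WFA $A'$ returned by \FuncSty{ComputeSVA} is an SVA for $f$, that is, that $A'$ is minimal, realizes $f$, and induces the FB rank factorization $\mP_{A'} = \mU \mD^{1/2}$, $\mS_{A'} = \mV \mD^{1/2}$ where $\mU \mD \mV^\top$ is the SVD of $\H_f$ built in Lemma~\ref{lem:rankfactSVD}. First I would argue that all the intermediate quantities are well-defined: since $\rho(\A^\otimes) < 1$, Lemma~\ref{lem:computeGpGs} applies and produces the Gram matrices $\mG_p = \mP^\top \mP$ and $\mG_s = \mS^\top \mS$ of the FB factorization $\H_f = \mP \mS^\top$ induced by $A$. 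These are genuine Gram matrices of full-rank infinite matrices, hence positive definite, so their spectral decompositions $\mG_p = \mV_p \mD_p \mV_p^\top$ and $\mG_s = \mV_s \mD_s \mV_s^\top$ used in Lemma~\ref{lem:rankfactSVD} exist with $\mD_p, \mD_s$ positive diagonal. Since $f \in \lonerat \subseteq \ltworat$ and Lemma~\ref{lem:boundedhankel} gives that $H_f$ is a bounded finite-rank operator, $\H_f$ admits a reduced SVD, so Lemma~\ref{lem:rankfactSVD} applies with $\mM = \H_f$ and yields matrices $\mQ_p = \mV_p \mD_p^{-1/2} \tilde{\mU}$, $\mQ_s = \mV_s \mD_s^{-1/2} \tilde{\mV}$, $\mD = \tilde{\mD}$ such that $\mU = \mP \mQ_p$, $\mV = \mS \mQ_s$, $\mU \mD \mV^\top = \H_f$ is a reduced SVD.

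The second step is to check that $A'$ computes $f$ and is minimal. By line~3 of Algorithm~\ref{alg:sva}, $A' = \mD^{1/2}\mQ_s^\top A \mQ_p \mD^{1/2}$; I need this to be a conjugation $A' = \mR^{-1} A \mR$ for an invertible $\mR$, so that $f_{A'} = f_A = f$ by the conjugation invariance in~(\ref{eqn:wfaconjugation}) and $A'$ is minimal since $A$ is. The candidate is $\mR = \mQ_p \mD^{1/2}$, and the claim reduces to the identity $\mR^{-1} = \mD^{1/2} \mQ_s^\top$, i.e. $(\mQ_p \mD^{1/2})(\mD^{1/2} \mQ_s^\top) = \mI$, equivalently $\mQ_p \mD \mQ_s^\top = \mI$. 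This is a purely finite-dimensional computation: expanding $\mQ_p \mD \mQ_s^\top = \mV_p \mD_p^{-1/2} \tilde{\mU} \tilde{\mD} \tilde{\mV}^\top \mD_s^{-1/2} \mV_s^\top = \mV_p \mD_p^{-1/2} \tilde{\mM} \mD_s^{-1/2} \mV_s^\top$ and substituting $\tilde{\mM} = \mD_p^{1/2} \mV_p^\top \mV_s \mD_s^{1/2}$ collapses everything (using $\mV_p^\top \mV_p = \mV_s^\top \mV_s = \mI$) to $\mV_p \mV_p^\top \mV_s \mV_s^\top$. If $\mV_p, \mV_s$ are square orthogonal (which they are, being eigenvector matrices of the $n \times n$ matrices $\mG_p, \mG_s$), this is $\mI$, as needed; I should state this orthogonality explicitly. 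Hence $A'$ is a minimal WFA for $f$.

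The third and main step is to identify the FB rank factorization induced by $A'$ with the SVD factorization. Writing $A' = \mR^{-1} A \mR$ with $\mR = \mQ_p \mD^{1/2}$, the forward matrix transforms as $\mP_{A'} = \mP_A \mR = \mP \mQ_p \mD^{1/2} = \mU \mD^{1/2}$, using $\mU = \mP \mQ_p$; and the backward matrix transforms as $\mS_{A'}^\top = \mR^{-1} \mS_A^\top = \mD^{1/2} \mQ_s^\top \mS^\top$, so $\mS_{A'} = \mS \mQ_s \mD^{1/2} = \mV \mD^{1/2}$, using $\mV = \mS \mQ_s$. Thus the FB factorization induced by $A'$ is exactly $\H_f = (\mU \mD^{1/2})(\mV \mD^{1/2})^\top$, which by Theorem~\ref{thm:sva} and the definition of SVA means $A'$ is an SVA for $f$. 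Here I am implicitly using the standard transformation rules for forward/backward matrices under conjugation, namely $\mP_{\mR^{-1} A \mR} = \mP_A \mR$ and $\mS_{\mR^{-1} A \mR} = (\mR^{-1})^\top \mS_A$, which follow immediately from the definitions $\mP_A(p,:) = \azero^\top \A_p$, $\mS_A(s,:) = (\A_s \ainf)^\top$ together with the form of the conjugated automaton in~(\ref{eqn:wfaconjugation}); I would spell these out in a line or two.

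The step I expect to be the crux is the bookkeeping in the second paragraph — verifying $\mQ_p \mD \mQ_s^\top = \mI$, hence that line~3 really is a conjugation — because it is the one place where one must be careful that $\mV_p$ and $\mV_s$ are full square orthogonal matrices (not merely column-orthonormal like $\tilde{\mU}, \tilde{\mV}, \mU, \mV$), so that $\mV_p \mV_p^\top = \mV_s \mV_s^\top = \mI$. Everything else is either a direct invocation of the preceding lemmas (well-definedness, boundedness, existence of SVD) or the routine algebra of how $\mP_A$, $\mS_A$ and $f_A$ behave under conjugation.
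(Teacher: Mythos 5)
Your proposal is correct and follows essentially the same route as the paper's proof: identify line~3 of the algorithm as the conjugation $A' = \mQ^{-1} A \mQ$ with $\mQ = \mQ_p \mD^{1/2}$, verify $\mQ_p \mD \mQ_s^\top = \mI$ (the paper extracts this from the middle of the chain of equalities already carried out in the proof of Lemma~\ref{lem:rankfactSVD}, where the square orthogonality $\mV_p \mV_p^\top = \mV_s \mV_s^\top = \mI$ that you rightly flag is used), and then read off $\mP_{A'} = \mP \mQ_p \mD^{1/2} = \mU \mD^{1/2}$ and $\mS_{A'} = \mS \mQ_s \mD^{1/2} = \mV \mD^{1/2}$. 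Your version merely spells out the well-definedness preliminaries and the conjugation rules for $\mP_A$, $\mS_A$ that the paper calls ``immediate to check''; aside from a harmless transposition-order slip in the parenthetical statement of the backward-matrix rule (your displayed computation has it right), there is nothing to correct.
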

\begin{proof}
Let $\mQ = \mQ_p \mD^{1/2}$. Our first observation is that $\mQ^{-1} =
\mD^{1/2} \mQ_s^\top$ and thus $A$ and $A'$ are minimal WFA for $f$.
Indeed, we already showed in the proof of Lemma~\ref{lem:rankfactSVD} that
\begin{align*}
\mQ_p \mD^{1/2} \mD^{1/2} \mQ_s^\top = \mQ_p \mD \mQ_s^\top = \mI
\enspace.
\end{align*}
In addition, it is immediate to check that if $A$ induces the rank
factorization $\H_f = \mP \mS^\top$, then $A'$ induces the rank factorization
$\H_f = (\mP \mQ_p \mD^{1/2}) (\mD^{1/2} \mQ_s^\top \mS^\top)$, which by
Lemma~\ref{lem:rankfactSVD} satisfies $\mP \mQ_p \mD^{1/2} = \mU \mD^{1/2}$ and
$\mD^{1/2} \mQ_s^\top \mS^\top = \mD^{1/2} \mV^\top$.
\end{proof}

We conclude this section by mentioning that it is possible to modify
$\FuncSty{ComputeSVA}$ to take as input a \emph{non-minimal} WFA $A$ realizing a
function $f \in \lonerat$ under the same assumption on the spectral radius of
the matrix $\A^\otimes$ as we have here.
We shall present the details of this modification somewhere else.
Nonetheless, we note that if given a non-minimal WFA $A$, one always has the
option to minimize $A$ (e.g.\ using the WFA minimization algorithm in
\cite{berstel2011noncommutative}) before attempting the SVA computation.

\subsection{Computational Complexity}

To bound the running time of $\FuncSty{ComputeSVA}(A)$ we recall the following
facts from numerical linear algebra (see e.g.\ \cite{trefethen1997numerical}):
\begin{itemize}
\item The SVD of $\mM \in \R^{d_1 \times d_2}$ ($d_1 \geq d_2$) can be computed
in time $O(d_1 d_2^2)$.
\item The spectral decomposition of a symmetric matrix $\mM \in \R^{d \times d}$
can be computed in time $O(d^3)$.
\item The inverse of an invertible matrix $\mM \in \R^{d \times d}$ can be
computed in time $O(d^3)$.
\end{itemize}

Now note that according to Lemma~\ref{lem:computeGpGs}, computing the Gram
matrices requires $O(k n^4)$ operations to obtain $\mI - \A^\otimes$, plus the
inversion of this $n^2 \times n^2$ matrix, which can be done in time $O(n^6)$.
From Lemma~\ref{lem:rankfactSVD} we see that once the $n \times n$ Gram matrices
$\mG_p$ and $\mG_s$ are given, then computing the singular values $\mD$ and the
change of basis matrices $\mQ_p$ and $\mQ_s$ can be done in time $O(n^3)$.
Finally, the cost of conjugating the WFA $A$ into $A'$ takes time $O(k n^3)$,
where $k = |\Sigma|$ and $n = \dim(A)$.
Hence, the overall running time of $\FuncSty{ComputeSVA}(A)$ is $O(n^6 + k
n^4)$.
Of course, this is a rough estimate which does not take into account
improvements that might be possible in practice, especially in those cases where
the transition matrices of $A$ are sparse -- in such case the complexity of most
operations could be bounded in terms of the number of non-zeros.




\section{Approximate Minimization of WFA}\label{sec:approxmin}

In this section we describe and analyze an approximate minimization algorithm
for WFA. The algorithm takes as input a minimal WFA $A$ with $n$ states and a
target number of states $\hat{n}$, and outputs a new WFA $\hat{A}$ with
$\hat{n}$ states approximating the original WFA $A$.
To obtain $\hat{A}$ we first compute the SVA $A'$ associated to $A$, and then
remove the $n - \hat{n}$ states associated with the smallest singular values of
$\H_{f_A}$.
We call this algorithm \FuncSty{SVATruncation} (see Algorithm~\ref{alg:minsva}
for details).
Since the algorithm only involves a call to \FuncSty{ComputeSVA} and a simple
algebraic manipulation of the resulting WFA, the running time of
\FuncSty{SVATruncation} is polynomial in $|\Sigma|$, $\dim(A)$ and $\hat{n}$.

\begin{algorithm}
\label{alg:minsva}
\DontPrintSemicolon
\caption{\FuncSty{SVATruncation}}
\KwIn{A minimal WFA $A$ with $n$ states, a target number of states $\hat{n} < n$}
\KwOut{A WFA $\hat{A}$ with $\hat{n}$ states}
Let $A' \leftarrow \FuncSty{ComputeSVA}(A)$\;
Let $\mat{\Pi} = [\mat{I}_{\hat{n}} \; \mat{0}] \in \R^{\hat{n} \times n}$\;
Let $\hA_\sigma = \mat{\Pi} \A'_\sigma \mat{\Pi}^\top$ for all $\sigma \in
\Sigma$\;
Let $\hazero = \mat{\Pi} \azero'$\;
Let $\hainf = \mat{\Pi} \ainf'$\;
Let $\hat{A} = \hwa$\;
\KwRet{$\hat{A}$}\;
\end{algorithm}

Roughly speaking, the rationale behind \FuncSty{SVATruncation} is that given an
SVA, the states corresponding to the smallest singular values are the ones with
less influence on the Hankel matrix, and therefore should also be the ones with
less influence on the associated rational function.
However, the details are more tricky than this simple intuition. The reason
being that a low rank approximation to $\H_f$ obtained by truncating its SVD is
not in general a Hankel matrix, and therefore does not correspond to any
rational function. In particular, the Hankel matrix of the function $\hat{f}$
computed by $\hat{A}$ is not obtained by truncating the SVD of $\H_f$.
This makes our analysis non-trivial.

The main result of this section is the following theorem, which bounds the
$\ltwo$-distance between the rational function $f$ realized by the original WFA
$A$, and the rational function $\hat{f}$ realized by the output WFA $\hat{A}$.
The principal attractive of our bound is that it only depends on intrinsic
quantities associated with the function $f$; that is, the final error bound is
independent of which WFA $A$ is given as input.
To comply with the assumptions made in the previous section, we shall assume
like in previous section that the input WFA $A$ satisfies $\rho(\A^{\otimes}) <
1$. The same precepts about this assumption discussed in
Section~\ref{sec:spectrad} apply here.

\begin{theorem}\label{thm:boundsvatrunc}
Let $f \in \lonerat$ with $\rank(f) = n$ and fix $0 < \hat{n} < n$.
If $A$ is a minimal WFA realizing $f$ and such that $\rho(\A^\otimes) < 1$, then
the WFA $\hat{A} = \FuncSty{SVATruncation}(A,\hat{n})$ realizes a function
$\hat{f}$ satisfying
\begin{equation}\label{eq:svatrunc}
\norm{f - \hat{f}}^2_{2} \leq C_f \sqrt{\sval_{\hat{n}+1} + \cdots + \sval_{n}}
\enspace,
\end{equation}
where $C_f$ is a positive constant depending only on $f$.
\end{theorem}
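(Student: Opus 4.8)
The plan is to work entirely with the SVA $A'$ of $f$ (so that $\azero' , \ainf' , \{\A'_\sigma\}$ have the orthonormality structure coming from the SVD of $\H_f$), and to track how much of the forward and backward "mass" is discarded when we project onto the top $\hat n$ coordinates via $\mat\Pi$. Write $D = \diag(\sval_1,\dots,\sval_n)$, so that the forward and backward matrices of $A'$ satisfy $\mP_{A'} = \mU D^{1/2}$ and $\mS_{A'} = \mV D^{1/2}$ with $\mU^\top \mU = \mV^\top \mV = \mI$. The key structural fact I would first record is that $\mG_p = D = \mG_s$ for the SVA, and more importantly that the "tail" Gram mass $\sum_{i > \hat n} \mathbf p_i^\top \mathbf p_i = \sval_{\hat n + 1} + \cdots + \sval_n$ (and likewise for the $\mathbf s_i$), since $\mathbf p_i = \mU(:,i)\sqrt{\sval_i}$. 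This is the quantity $\tau := \sval_{\hat n + 1} + \cdots + \sval_n$ appearing under the square root in \eqref{eq:svatrunc}.

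Next I would decompose the error $f - \hat f$ telescopically over strings. Writing $\azero'' = \mat\Pi^\top \mat\Pi \azero'$, etc., and $\mat\Pi^\top \mat\Pi = \diag(\mI_{\hat n}, \mathbf 0)$, one has for a string $x = x_1 \cdots x_t$ that $f(x) - \hat f(x) = \azero'^\top \A'_x \ainf' - \hazero^\top \hA_x \hainf$, and $\hazero^\top \hA_x \hainf = \azero'^\top (\mat\Pi^\top\mat\Pi) \A'_{x_1} (\mat\Pi^\top\mat\Pi) \cdots (\mat\Pi^\top\mat\Pi)\A'_{x_t}(\mat\Pi^\top\mat\Pi) \ainf'$. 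Inserting and removing projectors one symbol at a time gives a sum of $t+1$ terms, each of which is of the form (projected initial part) $\times$ ($\mI - \mat\Pi^\top\mat\Pi$) $\times$ (transition block) $\times$ (final part); the factor $\mI - \mat\Pi^\top\mat\Pi$ isolates the discarded coordinates $\hat n + 1, \dots, n$. I would then bound $\norm{f - \hat f}_2$ by the triangle inequality over these telescoped pieces, and for each piece use Cauchy--Schwarz over $\sstar$ together with Lemma~\ref{lem:innerprod}-type computations: sums of squares like $\sum_x (\azero'^\top \A'_x \me_i)^2$ are finite and computable via $(\mI - \A'^{\otimes})^{-1}$ because $\rho(\A'^{\otimes}) < 1$ (conjugation does not change the function, and one must check the spectral-radius hypothesis transfers to $A'$ — it does, since $\A'^{\otimes}$ is conjugate to $\A^{\otimes}$). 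The upshot is a bound of the shape $\norm{f - \hat f}_2 \le (\text{const depending on } f) \cdot \big(\sum_{i > \hat n} \|\mathbf p_i\|_2\big)^{1/2}$ or similar, and since $\sum_{i > \hat n} \|\mathbf p_i\|_2^2 = \tau$ one converts this to $\sqrt{\tau}$ by Cauchy--Schwarz on the finitely many ($\le n$) tail indices, absorbing the $\sqrt n$ into $C_f$.

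The constant $C_f$ should be assembled from three $f$-intrinsic ingredients: the largest singular value $\sval_1$ (equivalently $\normop{H_f}$), something like $\normhone{f}^{1/2}$ or $\normhinf{f}$ controlling the total forward/backward norm, and a term $\|(\mI - \A'^{\otimes})^{-1}\|$ — but this last one is \emph{not} obviously intrinsic to $f$, since it is stated in terms of the SVA. The cleanest route is to argue that the SVA is itself canonically determined by $f$ (it is, up to signs, by the uniqueness discussion after Theorem~\ref{thm:sva}), so any quantity computed from $A'$ is a function of $f$ alone; alternatively one expresses the relevant norms directly through the Hankel singular values and the action of $H_f$ on $\ltwo$. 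I would make the "only depends on $f$" claim precise by defining $C_f$ explicitly in terms of the SVA of $f$ and then remarking that the SVA is an $f$-invariant.

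\textbf{Main obstacle.} The hard part will be the telescoping error analysis: one must bound each of the infinitely-summed telescoped terms uniformly and show the resulting series over string length $t$ converges, which is where the $\rho(\A'^{\otimes}) < 1$ assumption does real work (it makes $\sum_t \A'^{\otimes\, t}$ summable and hence all the relevant $\ltwo$-norms finite), and one must be careful that the projector $\mat\Pi^\top\mat\Pi$ appearing in the \emph{middle} of a product of transition matrices — not just at the ends — can still be controlled, since truncating a transition matrix is not the same as truncating the automaton's reachable/observable spaces. Getting the dependence on the tail singular values down to exactly $\sqrt{\sval_{\hat n + 1} + \cdots + \sval_n}$, rather than something weaker like $\sum_{i>\hat n}\sval_i$ or $\sqrt{\sval_{\hat n+1}}$, will require squeezing the Cauchy--Schwarz steps in the right order; this is presumably why the authors say the proof is lengthy and defer it to Appendix~\ref{sec:proof}.
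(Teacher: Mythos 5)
Your overall plan is the same as the paper's: pass to the SVA, view truncation as inserting the projector $\mat{\Pi}^\top\mat{\Pi}$ into the products $\azero^\top\A_x\ainf$, telescope the difference so that each term isolates one factor living on the discarded coordinates, and sum a geometric series using the spectral-radius hypothesis. (The paper organizes the telescoping slightly differently: it first forms a Kronecker-square automaton $B$ computing $(f-\hat f)^2$, so that $\sum_{x\in\Sigma^t}(f(x)-\hat f(x))^2 = \czero(\C^t\cinf - \tC^t\tcinf)$ exactly, and then telescopes the matrix powers $\C^{t+1}-\tC^{t+1}$ by induction; this avoids your Cauchy--Schwarz over all of $\sstar$, but it is the same idea.)

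The genuine gap is that you never supply the ingredient that makes the "middle projector" terms small, which you yourself flag as the main obstacle. Your tail-mass observation $\sum_{i>\hat n}\norm{\mp_i}_2^2 = \sval_{\hat n+1}+\cdots+\sval_n$ controls the discarded pieces of $\ainf$ and $\azero$ (the terms where $\mI-\mat{\Pi}^\top\mat{\Pi}$ sits at an end of the product), but the dominant contribution comes from the off-diagonal transition blocks $\A_\sigma^{(12)}$ and $\A_\sigma^{(21)}$ that couple kept states to discarded states, and nothing in your proposal bounds these. The paper's key step is Lemma~\ref{lem:svaequations}: for an SVA, $\sum_i \sval_i\sum_\sigma \A_\sigma(i,j)^2 = \sval_j - \azero(j)^2$ (and its transpose), which is proved by decomposing $\norm{\mp_j}_2^2=\sval_j$ over the last symbol and using orthogonality of the columns of $\mP$. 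From this one gets
\begin{equation*}
\sum_\sigma\normf{\A_\sigma^{(12)}}^2 \;\le\; \frac{\sval_{\hat n+1}+\cdots+\sval_n}{\sval_{\hat n}}\enspace,
\end{equation*}
and this is exactly what converts the coupling terms into the representation-independent quantity $\sqrt{\sval_{\hat n+1}+\cdots+\sval_n}$. Without this identity your telescoping stalls at terms you cannot estimate in terms of $f$-intrinsic quantities. A secondary, fixable issue: to sum the geometric series you need the operator norm $\norm{\sum_\sigma\A_\sigma^{\otimes 2}}<1$, not merely $\rho(\A^\otimes)<1$; the paper isolates this as a separate assumption and sketches how to remove it, whereas your proposal conflates the two.
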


A few remarks about this result are in order.
The first is to observe that because $\sval_1 \geq \cdots \geq \sval_n$, the
error decreases when $\hat{n}$ increases, which is the desired behavior: the
more states $\hat{A}$ has, the closer it is to $A$.
The second is that \eqref{eq:svatrunc} does not depend on which representation
$A$ of $f$ is given as input to \FuncSty{SVATruncation}. This is a consequence of
first obtaining the corresponding SVA $A'$ before truncating.
Obviously, one could obtain another approximate minimization by truncating $A$
directly. However, in that case the final error would depend on the initial $A$
and in general it does not seem possible to use this approach for providing
\emph{representation independent} bounds on the quality of approximation.

The main bulk of the proof of Theorem~\ref{thm:boundsvatrunc} is deferred to
Appendix~\ref{sec:proof}.
Here we will only discuss the basic principle behind the proof and a key
technical lemma which highlights the relevance of the SVA canonical form in our
approach.

The first step in the proof is to combine $A'$ and $\hat{A}$ into a single WFA
$B$ computing $f_B = (f-\hat{f})^2$, and then decompose the error as
\begin{equation*}
\norm{f - \hat{f}}_2^2 = \sum_{t \geq 0} \left(\sum_{x \in \Sigma^t} f_B(x) \right)
\enspace.
\end{equation*}
One can then proceed to bound $f_B(\Sigma^t)$ for all $t \geq 0$ in terms of the
weights of $A'$; this involves lengthy algebraic manipulations with many
intermediate steps exploiting a variety of properties of matrix norms and
Kronecker products.
The last and key step is to exploit the internal structure of the SVA canonical
form in order to turn these bounds into representation independent quantities.
This part of the analysis is based on the following powerful lemma.


\begin{lemma}\label{lem:svaequations}
Let $A = \wa$ be an SVA with $n$ states realizing a function $f \in \lonerat$
with Hankel singular values $\sval_1 \geq \cdots \geq \sval_n$.
Then the following are satisfied:
\begin{enumerate}
\item For all $j \in [n]$, $\sum_{i} \sval_i \sum_{\sigma} \A_{\sigma}(i,j)^2 =
\sval_j - \azero(j)^2$,
\item For all $i \in [n]$, $\sum_{j} \sval_j \sum_{\sigma} \A_{\sigma}(i,j)^2 =
\sval_i - \ainf(i)^2$.
\end{enumerate}
\end{lemma}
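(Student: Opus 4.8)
The plan is to reduce both identities to a pair of Lyapunov-type fixed-point equations for the Gram matrices of the forward and backward matrices of $A$, and then invoke the defining property of the SVA to identify these Gram matrices with the diagonal matrix $\mD = \diag(\sval_1,\dots,\sval_n)$ of Hankel singular values. Concretely, since $A$ is an SVA it induces the FB rank factorization $\H_f = \mP_A \mS_A^\top$ with $\mP_A = \mU\mD^{1/2}$ and $\mS_A = \mV\mD^{1/2}$ for a reduced SVD $\H_f = \mU\mD\mV^\top$. As $\mU^\top\mU = \mV^\top\mV = \mI$, the Gram matrices satisfy $\mG_p = \mP_A^\top\mP_A = \mD$ and $\mG_s = \mS_A^\top\mS_A = \mD$. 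These infinite sums are well-defined because each column of $\mP_A$, resp.\ $\mS_A$, is a linear combination of left, resp.\ right, singular vectors of $\H_f$ and hence lies in $\ltwo$ — exactly the fact already used in the proof of Lemma~\ref{lem:computeGpGs}.

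Next I would derive the fixed-point equations from the recursive (prefix/suffix) structure of $\sstar$. Every prefix is uniquely either $\lambda$ or $p'\sigma$, and $\mP_A(p'\sigma,:) = \azero^\top\A_{p'\sigma} = \mP_A(p',:)\A_\sigma$ while $\mP_A(\lambda,:) = \azero^\top$; regrouping the absolutely convergent double sum $\mG_p = \sum_{p\in\sstar}\mP_A(p,:)^\top\mP_A(p,:)$ according to the last symbol yields
\[
\mG_p = \azero\azero^\top + \sum_{\sigma}\A_\sigma^\top\,\mG_p\,\A_\sigma \enspace.
\]
Dually, writing every suffix as $\lambda$ or $\sigma s'$ and using $\mS_A(\sigma s',:) = (\A_\sigma\A_{s'}\ainf)^\top = \mS_A(s',:)\A_\sigma^\top$ together with $\mS_A(\lambda,:) = \ainf^\top$, the same regrouping gives
\[
\mG_s = \ainf\ainf^\top + \sum_{\sigma}\A_\sigma\,\mG_s\,\A_\sigma^\top \enspace.
\]

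Finally, substitute $\mG_p = \mG_s = \mD$ and read off diagonal entries. From the first equation the $(j,j)$ entry gives $\sval_j = \azero(j)^2 + \sum_\sigma (\A_\sigma^\top\mD\A_\sigma)(j,j) = \azero(j)^2 + \sum_\sigma\sum_i \sval_i\,\A_\sigma(i,j)^2$, which is item~1 after rearranging; from the second equation the $(i,i)$ entry gives $\sval_i = \ainf(i)^2 + \sum_\sigma (\A_\sigma\mD\A_\sigma^\top)(i,i) = \ainf(i)^2 + \sum_\sigma\sum_j \sval_j\,\A_\sigma(i,j)^2$, which is item~2. The only genuinely delicate point in this argument is justifying the rearrangement and regrouping of the infinite sums defining $\mG_p$ and $\mG_s$ (which rests on the columns of $\mP_A$ and $\mS_A$ being square-summable); everything else is bookkeeping with the orthonormality relations $\mU^\top\mU = \mV^\top\mV = \mI$ and the block structure of $\sstar$.
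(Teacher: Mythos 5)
Your proof is correct and is essentially the paper's own argument in matrix form: the paper decomposes $\norm{\mp_j}_2^2=\sval_j$ over $\sstar$ by last symbol and uses orthogonality of the columns of $\mP$ to get $\norm{\mp_j^\sigma}_2^2=\sum_i\sval_i\A_\sigma(i,j)^2$, which is exactly the $(j,j)$ entry of your Lyapunov equation $\mG_p=\azero\azero^\top+\sum_\sigma\A_\sigma^\top\mG_p\A_\sigma$ after substituting $\mG_p=\mD$ (and dually for $\mS$). The packaging as a fixed-point identity for the Gram matrices is slightly tidier but the key ideas — the SVD-induced FB factorization, $\mG_p=\mG_s=\mD$, the prefix/suffix regrouping, and the absolute-convergence justification — coincide with the paper's.
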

\begin{proof}
Recall that $A$ induces the rank factorization $\H_f = \mP \mS^\top = (\mU
\mD^{1/2}) (\mD^{1/2} \mV^\top)$ corresponding to the SVD of $\H_f$.
Let $\mp_j$ be the $j$th column of $\mP = [\mp_1 \cdots \mp_n]$ and note
we have $\norm{\mp_j}_2^2 = \sval_j$.
By appropriately decomposing the sum in $\norm{\mp_j}_2^2$ we get the
following\footnote{Here we are implicitly using the fact that $\sum_{x}
\mp_j(x)^2$ is absolutely (and therefore unconditionally) convergent, which
implies that any rearrangement of its terms will converge to the same value.}:
\begin{equation}
\sval_j = \mp_j(\lambda)^2 + \sum_{\sigma \in \Sigma} \sum_{x \in \sstar}
\mp_j(x \sigma)^2 \enspace. \label{eq:svalj}
\end{equation}
Let us write $\mp_j^\sigma$ for the element of $\ltwo(\Sigma)$ given by
$\mp_j^\sigma(x) = \mp_j(x \sigma)$. Note that by construction we have
$\mp_j^\sigma = \mP \cdot \A_\sigma(:,j) = \sum_{i \in [n]} \mp_i \A_\sigma(i,j)$.
Since $A$ is an SVA, the columns of $\mP$ are orthogonal and therefore we have
\begin{align*}
\norm{\mp_j^\sigma}_2^2 &= \left\langle \sum_i \mp_i \A_\sigma(i,j), \sum_{i'}
\mp_{i'} \A_\sigma(i',j) \right\rangle \\
&= \sum_{i,i'} \A_\sigma(i,j) \A_\sigma(i',j) \langle \mp_i, \mp_{i'} \rangle \\
&= \sum_i \sval_i \A_\sigma(i,j)^2 \enspace.
\end{align*}
Plugging this into \eqref{eq:svalj} and noting that $\mp_j(\lambda) =
\azero(j)$, we obtain the first claim.
The second claim follows from applying the same argument to the columns of
$\mS$.
\end{proof}

To see the importance of this lemma for approximate minimization, let us
consider the following simple consequence which can be derived by combining the
bounds for $\A_\sigma(i,j)$ obtained from considering it belongs to the $i$th row
and the $j$th column of $\A_\sigma$:
\begin{equation*}
|\A_\sigma(i,j)| \leq \min\left\{ \sqrt{\frac{\sval_i}{\sval_j}},
\sqrt{\frac{\sval_j}{\sval_i}} \right\}
= \sqrt{\frac{\min\{\sval_i,\sval_j\}}{\max\{\sval_i,\sval_j\}}} \enspace.
\end{equation*}
This bound is telling us that in an SVA, transition weights further away from
the diagonals of the $\A_\sigma$ are going to be small whenever there is a wide
spread between the largest and smallest singular values; for example,
$|\A_\sigma(1,n)| \leq \sqrt{\sval_n/\sval_1}$.
Intuitively, this means that in an SVA the last states are very weakly connected
to the first states, and therefore removing these connections should not affect
the output of the WFA too much.
Our proof in Appendix~\ref{sec:proof} exploits this intuition and turns it into
a definite quantitative statement.

\section{Technical Discussions}\label{sec:discuss}

This section collects in-detail discussions about two technical aspects of our
work.
The first one is the relation and consequences of our results with respect to
the mathematical theory of low-rank approximation of rational series.
The second part makes some remarks about the assumption on the spectral radius
of WFA made in our results from Sections~\ref{sec:sva} and~\ref{sec:approxmin}.

\subsection{Low-rank Approximation of Rational Series}

We have already discussed how the behavior of the bound \eqref{eq:svatrunc}
matches what intuition suggests. Let us now discuss a little bit more about the
quantitative aspects of the bound.
In particular, we want to make a few observations about the connection of
\eqref{eq:svatrunc} with low-rank approximation of matrices.
We hope these will shed some light on the mathematical theory of low-rank
approximation of rational series, and its relations with low-rank approximations
of infinite Hankel matrices -- a question which certainly deserves further
investigation.


Recall that given a rank-$n$ matrix $\mM \in \R^{d_1 \times d_2}$ and some $1
\leq \hat{n} < n$, the matrix low-rank approximation problem asks for a matrix
$\hat{\mM}$ attaining the optimal of the following optimization
problem:
\begin{equation*}
\min_{\rank(\mM') \leq \hat{n}} \normf{\mM - \mM'}
\enspace.
\end{equation*}
It is well-known the solution to this problem can be computed using the SVD
of $\mM$ and satisfies the following error bounds in terms of Schatten
$p$-norms:
\begin{equation*}
\normsp{\mM - \hat{\mM}} = \norm{(\sval_{\hat{n}+1},\ldots,\sval_n)}_p \enspace.
\end{equation*}

Using these results it is straightforward to give lower bounds on the
approximation errors achievable by low-rank approximation of rational series in
terms of Schatten--Hankel norms (cf.\ Section~\ref{sec:svaexists}).
Let $1 \leq p \leq \infty$ and suppose $f \in \lonerat$ has rank $n$ and Hankel
singular values $\sval_1 \geq \cdots \geq \sval_n$.
Then the following holds for every $f' \in \lonerat$ with $\rank(f') \leq
\hat{n}$:
\begin{equation}\label{eqn:lbound}
\normhp{f - f'} \geq \norm{(\sval_{\hat{n}+1},\ldots,\sval_n)}_p \enspace.
\end{equation}

On the other hand, we define the optimal $\ltwo$ approximation error of $f$ with
respect to all rational functions of rank at most $\hat{n}$ as
\begin{equation*}
\epsopt_{\hat{n}} = \inf_{\rank(f') \leq \hat{n}} \norm{f - f'}_2 \enspace.
\end{equation*}
It is easy to see the infimum will be attained at some $\fopt_{\hat{n}} \in
\ltworat$.
If $\fsva_{\hat{n}} \in \ltworat$ denotes the function realized by the solution
obtained from our $\FuncSty{SVATruncation}$ algorithm, then
Theorem~\ref{thm:boundsvatrunc} implies the bound
\begin{equation}\label{eqn:ubound}
\epsopt_{\hat{n}} \leq \norm{f - \fsva_{\hat{n}}}_2 \leq C_f^{1/2}
\norm{(\sval_{\hat{n}+1},\ldots,\sval_n)}_1^{1/4} \enspace.
\end{equation}

Combining the bounds \eqref{eqn:lbound} and \eqref{eqn:ubound} above, we can
conclude that the performance of our approximation $\fsva_{\hat{n}}$
with respect to $f$ and $\fopt_{\hat{n}}$ can be bounded as follows:
\begin{equation*}
\norm{f - \fopt_{\hat{n}}}_2 \leq \norm{f - \fsva_{\hat{n}}}_2 \leq
C_f^{1/2} \normhone{f - \fopt_{\hat{n}}}^{1/4} \enspace.
\end{equation*}
In future work we plan to investigate the tightness of these bounds and the
computational complexity of (approximately) computing $\fopt_{\hat{n}}$.


\subsection{Spectral Radius Assumptions}\label{sec:spectrad}

The algorithms presented in Sections~\ref{sec:sva} and~\ref{sec:approxmin}
assume their input is a WFA $A = \wa$ such that $\A^\otimes = \sum_\sigma
\A_\sigma \otimes \A_\sigma$ has spectral radius $\rho(\A^\otimes) < 1$.
This condition is used in order to guarantee the existence of a closed-form
expression for the summation of the series $\sum_{t \geq 0} ({\A^{\otimes}})^t$.
Algorithm \FuncSty{ComputeSVA} uses this expression for computing the Gram
matrices $\mG_p = \mP_A^\top \mP_A$ and $\mG_s = \mS_A^\top \mS_A$ associated
with the FB rank factorization $\H_{f_A} = \mP_A \mS_A^\top$ induced by $A$.
A first important remark is that since $\rho(\A^\otimes)$ is defined in terms
of the eigenvalues of $\A^\otimes$, the assumption can be tested efficiently.
The rest of this sections discusses the following two questions: (1) is the
assumption always true in general? and, (2) if not, is there an alternative way
to compute the Gram matrices needed by \FuncSty{ComputeSVA}?

Regarding the first question, let us start by pointing out a natural way in
which one could try to prove that the assumption always hold.
This approach is based on the following result due to F.\ Denis
\cite{denispersonal}.
\begin{proposition}\label{prop:denis}
Let $A = \wa$ be a minimial WFA realizing $f_A \in \lonerat$. Then the spectral
radius of $\A = \sum_\sigma \A_\sigma$ satisfies $\rho(\A) < 1$.
\end{proposition}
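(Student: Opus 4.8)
The plan is to exploit the interplay between absolute convergence of $f_A$ and minimality of $A$ so as to show that every entry of the matrix power series $\sum_{t \ge 0} \A^t$ converges; by the standard fact recalled in Section~\ref{sec:background} that $\sum_{t \ge 0} \mM^t$ converges if and only if $\rho(\mM) < 1$, this immediately yields $\rho(\A) < 1$. The whole argument hinges on turning a scalar summability statement into entrywise control over $\A^t$.

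First I would record that, since $\A = \sum_\sigma \A_\sigma$, multilinearity of matrix multiplication gives $\A^t = \sum_{x \in \Sigma^t} \A_x$. Hence, for any fixed prefix $p$ and suffix $s$, summing the identity $f_A(p x s) = \azero^\top \A_p \A_x \A_s \ainf$ over all $x \in \Sigma^t$ produces
\begin{equation*}
\sum_{x \in \Sigma^t} f_A(p x s) = \azero^\top \A_p \A^t \A_s \ainf \enspace.
\end{equation*}
Because the map $x \mapsto p x s$ is injective for fixed $p,s$, the triangle inequality together with $f_A \in \lone$ gives $\sum_{t \ge 0} |\azero^\top \A_p \A^t \A_s \ainf| \le \sum_{t \ge 0} \sum_{x \in \Sigma^t} |f_A(p x s)| \le \norm{f_A}_1 < \infty$ for every $p, s \in \sstar$ (the rearrangement is legitimate by absolute convergence).

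Next I would bring in minimality of $A$. Since $\H_{f_A} = \mP_A \mS_A^\top$ and $\rank(\H_{f_A}) = n$ by Theorem~\ref{lem:fundamentalWFA}, both the forward matrix $\mP_A$ (with rows $\azero^\top \A_p$) and the backward matrix $\mS_A$ (with rows $(\A_s \ainf)^\top$) have rank $n$; consequently the families $\{\A_p^\top \azero\}_{p \in \sstar}$ and $\{\A_s \ainf\}_{s \in \sstar}$ each span $\R^n$. Given $i,j \in [n]$, I would therefore write $\me_i$ and $\me_j$ as finite linear combinations $\me_i = \sum_k a_k \A_{p_k}^\top \azero$ and $\me_j = \sum_l b_l \A_{s_l} \ainf$, so that $\A^t(i,j) = \me_i^\top \A^t \me_j = \sum_{k,l} a_k b_l\, \azero^\top \A_{p_k} \A^t \A_{s_l} \ainf$. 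Summing over $t$ and applying the previous bound term by term gives $\sum_{t \ge 0} |\A^t(i,j)| < \infty$ for all $i,j$, hence $\sum_{t \ge 0} \A^t$ converges, and the recalled fact forces $\rho(\A) < 1$.

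The point I expect to matter most — and the reason minimality is indispensable — is that taking $p = s = \lambda$ only yields $\sum_t |\azero^\top \A^t \ainf| < \infty$, which is far too weak on its own: coordinatewise cancellations in this single scalar sequence are perfectly compatible with a large spectral radius. Spanning via minimality is precisely what upgrades this to control over every entry of $\A^t$. Beyond that, I anticipate no real obstacle: the only care needed is justifying the rearrangement of $\sum_{t,x} f_A(pxs)$ by absolute convergence and recording that "$\mP_A$ and $\mS_A$ have rank $n$" is exactly the minimality hypothesis via $\H_{f_A} = \mP_A \mS_A^\top$.
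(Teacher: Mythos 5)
Your proof is correct. Note, however, that the paper itself gives no proof of Proposition~\ref{prop:denis}: it is stated as a result due to F.~Denis and cited from a personal communication, so there is no in-paper argument to compare yours against. Your argument is self-contained and uses only ingredients already available in the paper: the identity $\A^t = \sum_{x \in \Sigma^t} \A_x$, the injectivity of $x \mapsto pxs$ to bound $\sum_{t}\bigl|\azero^\top \A_p \A^t \A_s \ainf\bigr|$ by $\norm{f_A}_1$, the fact that minimality makes $\H_{f_A} = \mP_A \mS_A^\top$ a rank factorization so that the forward vectors $\A_p^\top \azero$ and backward vectors $\A_s \ainf$ each span $\R^n$, and the Neumann-series criterion ($\sum_t \mM^t$ converges iff $\rho(\mM)<1$) recalled in Section~\ref{sec:background}. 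Each step checks out, including the point you flag as essential: without minimality one only controls the scalar sequence $\azero^\top \A^t \ainf$, which does not constrain $\rho(\A)$, and the spanning argument is exactly what upgrades this to entrywise control of $\A^t$. This is a genuinely useful addition, since it supplies a proof the paper omits.
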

In view of this, a natural question to ask is whether the fact $\rho(\A) < 1$
implies $\rho(\A^\otimes) < 1$. While this follows from the equation $\rho(\mM
\otimes \mM) = \rho(\mM)^2$ in the case with $|\Sigma| = 1$, the result is not
true in general for arbitrary matrices. In fact, obtaining interesting bounds on
the spectral radius of matrices of the form $\mM_1 \otimes \mM_1 + \mM_2 \otimes
\mM_2$ is an area of active research in linear algebra
\cite{lototsky2015simple}. Following this approach would require proving new
bounds along these lines that apply to matrices defining WFA for absolutely
convergent rational series.
An alternative approach based on Proposition~\ref{prop:denis} could be to show
that the automaton computing $f_A^2$ obtained in Lemma~\ref{lem:innerprod} is
minimal. However, this is not true in general as witnessed by the following
example.
Let $A$ be the WFA over $\Sigma = \{a,b\}$ with $2$ states given by:
\begin{align*}
\azero^\top &= [1 \;\; 0] \enspace, \\
\ainf^\top &= [1/3 \;\; 1/3] \enspace, \\
\A_a &= \left[ \begin{array}{cc} 0 & 1/3 \\ 1/3 & 0 \end{array} \right] \enspace, \\
\A_b &= \left[ \begin{array}{cc} -1/3 & 0 \\ 0 & 1/3 \end{array} \right]
\enspace. \\
\end{align*}
Note that $\norm{f_A}_1 = 1$ and therefore we have $f_A \in \lonerat$.
It is easy to see, by looking at the rows of $\H_{f_A}$ corrsponding to prefixes
$\lambda$ and $a$, that $\rank(f_A) = 2$; thus, $A$ is minimal.
On the other hand, one can check that $f_A(x)^2 = 3^{-2(|x|+1)}$. Thus, $f_A^2$
has rank $1$ and the $4$-state WFA for $f_A^2$ constructed in
Lemma~\ref{lem:innerprod} is not minimal.
In conclusion, though we have not been able to provide a counter-example to the
fact that $\rho(\A^\otimes) < 1$ when $A$ is a minimal WFA realizing a function
$f_A \in \lonerat$, we suspect that making progress on this problem will require
a deeper understanding of the structure of absolutely convergent rational
series.

The second question is whether it is possible to compute an SVA efficiently for
a WFA such that $\rho(\A^\otimes) \geq 1$.
The key ingredient here is to provide an alternative way of computing the Gram
matrices required in \FuncSty{ComputeSVA}.
A first remark is that such Gram matrices are guaranteed to exist regardless of
whether the assumption on the spectral radius of $\A^\otimes$ holds or not; this
follows from the proof of Lemma~\ref{lem:computeGpGs}.
It also follows from the same proof that each entry $\mG_p(i,j)$ corresponds to
the inner product $\langle p_i, p_j \rangle$ between two rational functions in
$\ltworat$; the same holds for the entries of $\mG_s$.
This observation suggests that, instead of computing the Gram matrices in ``one
shot'' as in Lemma~\ref{lem:computeGpGs}, it might be possible to compute each entry
$\mG_p(i,j)$, $1 \leq i \leq j \leq n$, separately -- note the constraint on the
indices exploits the fact that $\mG_p$ is symmetric.
This can be done as follows. Recall that $A_{p,i} = \langle \azero,
\me_i, \{\A_\sigma\} \rangle$ computes $p_i$ for all $i \in [n]$. Then the
function $f_{p,i,j} = p_i \cdot p_j$ is computed by the WFA $A_{p,i,j} = \langle
\azero^{\otimes 2}, \me_i \otimes \me_j, \{\A_\sigma^{\otimes 2}\} \rangle$ with
$n^2$ states.
Now observe that by Hölder's inequality we have $f_{p,i,j} \in \lonerat$.
Therefore, if $\tilde{\A}_{p,i,j} = \twa$ is a minimization of $A_{p,i,j}$ with
$\rank(f_{p,i,j})$ states, then we must have $\rho(\tilde{\A}) <
1$ by Proposition~\ref{prop:denis}, where $\tilde{\A} = \sum_\sigma
\tilde{\A}_\sigma$. 
Using the same argument as in Lemma~\ref{lem:innerprod} we can conclude that
\begin{align*}
\mG_p(i,j) &= \langle p_i, p_j \rangle \\
&= \sum_{x \in \sstar} f_{p,i,j}(x) \\
&= \tazero^\top (\mI - \tilde{\A})^{-1} \tainf \enspace.
\end{align*}
This gives an alternate procedure for computing $\mG_p$ and $\mG_s$ which involves
$\Theta(n^2)$ WFA minimizations of automata with $n^2$ states, each of them
taking time $O(n^6)$ (cf.\ \cite{berstel2011noncommutative}).
Hence, the cost of this alternate procedure is of order $O(n^8)$, and should only
be used when it is not possible to use the $O(n^6)$ procedure given in
Section~\ref{sec:compsva}.

\section{Conclusions and Future Work}\label{sec:conclusion}

With this paper we initiate a systematic study of approximate minimization
problems of quantitative systems.
Here we have focused our attention on weighted
finite automata realizing absolutely convergent rational series.  These
are, of course, not all rational series but include many situations of
interest, for example, all fully probabilistic automata.
We have shown how the connection between rational series and infinite
Hankel matrices yields powerful tools for analyzing approximate minimization
problems for WFA: the singular value decomposition of infinite Hankel
matrices and the singular values themselves.  Our first contribution: an
algorithm for computing the SVD of an infinite Hankel matrix by operating
on its ``compressed'' representation as a WFA uses these tools in a crucial
way.
Such a decomposition leads us to our second contribution: the
definition of the singular value automaton (SVA) associated with a rational
function $f$.
SVA provide a new canonical form for WFA which is unique under the same
conditions guaranteeing uniqueness of the SVD decomposition for Hankel matrices.
We were also able to give an efficient algorithm for computing the SVA of a
rational function $f$ from any WFA realizing $f$.

Our second set of contributions are related to the application of SVA canonical
forms to the approximate minimization of WFA.
The algorithm \FuncSty{SVATruncation} and the corresponding analysis presented
in Theorem~\ref{thm:boundsvatrunc} provide novel and rigorous bounds on the
quality of our approximations measured in terms of $\norm{f - \hat{f}}_2$,
the $\ltwo$ norm between the original and minimized functions.
The importance of such bounds lies in the fact that they depend only on
intrinsic quantities associated with $f$.

The present paper opens the door to many possible extensions.
First and foremost, we will seek further applications and properties of the SVA
canonical form for WFA. For example, a simple question that remains unanswered is
to what extent the equations in Lemma~\ref{lem:svaequations} are enough to
characterize the weights of an SVA.
In the near future we are also interested in conducting a thorough empirical
study by implementing the algorithms presented here. This should serve to
validate our ideas and explore their possible applications to machine learning
and other applied fields where WFA are used frequently. 
We will also set out to study the tightness of the bound in
Theorem~\ref{thm:boundsvatrunc} in practical situations, and conjecture further
refinements if necessary.
It should also be possible to extend our results to other classes of systems
closely related to weighted automata.  In particular, we want to study
approximate minimization problems for weighted tree automata and weighted
context-free grammars, for which the notions of Hankel matrix can be naturally
extended. Along these lines, it will be interesting to compare our approach to
recent works that improve the running time of parsing algorithms by reducing the
size of probabilistic context-free grammars using low-rank tensor approximations
\cite{collins2012tensor,cohen-13a}.

Though we have not emphasized it in the present paper, this work is
inspired, in part, from the general co-algebraic view of Brzozowski-style
minimization~\cite{Bonchi14}.  We have expressed everything in very
concrete matrix algebra terms because we are using the singular value
decomposition in a crucial way.  However, there are other minimization
schemes for other types of automata coming from other
dualities~\cite{Bezhanishvili12} for which we think similar approximate
minimization schemes can be developed.  A general abstract notion of
approximate minimization is, of course, a very tempting subject to pursue
and, after we have more examples, it would be certainly high on our
agenda.  For the moment, however, we will concentrate on concrete instances.

\bibliographystyle{plain}
\bibliography{paper}

\appendix

\section{Proof of Theorem~\ref{thm:boundsvatrunc}}\label{sec:proof}

Unless stated otherwise, for the purpose of this appendix $\norm{\mat{v}}$
always denotes the Euclidean norm on vectors and $\norm{\mat{M}} =
\sup_{\norm{\mat{v}} = 1} \norm{\mat{M} \mat{v}}$ denotes the corresponding
induced norm on matrices.
We shall use several properties of these norms extensively in the proof.
The reader can consult \cite{bhatia1997matrix} for a comprehensive account of
these properties. In here we will just recall the following:
\begin{enumerate}
\item
$\norm{\mM \mN} \leq \norm{\mM} \norm{\mN}$,
\item
$\norm{\mU \mM \mU^\top} = \norm{\mM}$ whenever $\mU \mU^\top = \mU^\top \mU =
\mI$,
\item
$\norm{\diag(\mM,\mN)} = \max \{ \norm{\mM}, \norm{\mN} \}$,
\item $\norm{\mM} \leq \norm{\mM}_F = \sqrt{\sum_{i,j} \mM(i,j)^2}$,
\item $\norm{\mM \otimes \mN} = \norm{\mM} \norm{\mN}$.
\end{enumerate}

We start by noting that, without loss of generality, we can assume the automaton
$A$ given as input to \FuncSty{SVATruncation} is in SVA form (in which case $A'
= A$).

Now we introduce some notation by splitting the weights conforming $A$ into a
block corresponding to states $1$ to $\hat{n}$, and another block containing
states $\hat{n}+1$ to $n$. With this, we write the following:
\begin{align*}
\azero &= \left[ \azero^{(1)} \; \azero^{(2)} \right] \enspace, \\
\ainf &= \left[ \begin{array}{c} \ainf^{(1)} \\ \ainf^{(2)} \end{array} \right]
\enspace, \\
\A_\sigma &= \left[ \begin{array}{cc} \A_\sigma^{(11)} & \A_\sigma^{(12)} \\
\A_\sigma^{(21)} & \A_\sigma^{(22)} \end{array} \right] \enspace.
\end{align*}
It is immediate to check that $\hat{A} = \FuncSty{SVATruncation}(A,\hat{n})$ is
given by $\hazero = \azero^{(1)}$, $\hainf = \ainf^{(1)}$, and $\hA_\sigma =
\A_\sigma^{(11)}$.
For simplicity of notation, we assume here a throughout the rest of the proof
that initial weights of WFA are given as \emph{row} vectors.

Although $\hat{A}$ has $\hat{n}$ states, it will be convenient for our proof to
find another WFA with $n$ states computing the same function as $\hat{A}$. We
call this WFA $\tilde{A}$, and its construction is explained in the following
claim, whose proof is just a simple calculation.
\begin{claim}
Let $\tilde{A} = \twa$ be the WFA with $n$ states given by $\tazero = \azero$,
\begin{align*}
\tainf &= \left[ \begin{array}{c} \ainf^{(1)} \\ \mat{0} \end{array} \right]
\enspace, \\
\tA_\sigma &= \left[ \begin{array}{cc} \A_\sigma^{(11)} & \mat{0} \\ \mat{0} &
\A_\sigma^{(22)} \end{array} \right] \enspace.
\end{align*}
Then $\tilde{f} = f_{\tilde{A}} = \hat{f}$.
\end{claim}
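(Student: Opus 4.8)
The plan is to verify directly that the WFA $\tilde{A}$ computes the same function as $\hat{A}$ by unfolding the definition of $f_{\tilde{A}}$ on an arbitrary string and exploiting the block-diagonal structure of $\tilde{\A}_\sigma$. First I would fix a string $x = x_1 \cdots x_t \in \sstar$ and write $f_{\tilde{A}}(x) = \tazero^\top \tilde{\A}_{x_1} \cdots \tilde{\A}_{x_t} \tainf$. Since each $\tilde{\A}_\sigma = \diag(\A_\sigma^{(11)}, \A_\sigma^{(22)})$ is block-diagonal with the blocks conforming to the $\hat{n} \mid (n-\hat{n})$ split, the product $\tilde{\A}_{x_1} \cdots \tilde{\A}_{x_t}$ is itself block-diagonal and equals $\diag\!\left(\A_{x_1}^{(11)} \cdots \A_{x_t}^{(11)},\, \A_{x_1}^{(22)} \cdots \A_{x_t}^{(22)}\right)$; this is the one small induction on $t$ that the claim refers to.

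Next I would contract this block-diagonal product against $\tazero = \azero = [\azero^{(1)}\; \azero^{(2)}]$ on the left and $\tainf = [\ainf^{(1)};\, \mat{0}]$ on the right. Because the final-weight vector has a zero lower block, the entire contribution of the $(22)$ block vanishes, and we are left with
\begin{equation*}
f_{\tilde{A}}(x) = (\azero^{(1)})^\top \A_{x_1}^{(11)} \cdots \A_{x_t}^{(11)} \ainf^{(1)} \enspace.
\end{equation*}
On the other hand, by the description of $\hat{A} = \FuncSty{SVATruncation}(A,\hat{n})$ recorded just above the claim — namely $\hazero = \azero^{(1)}$, $\hainf = \ainf^{(1)}$, and $\hA_\sigma = \A_\sigma^{(11)}$ — the right-hand side is exactly $\hazero^\top \hA_{x_1} \cdots \hA_{x_t} \hainf = \hat{f}(x)$. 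I would also check the base case $x = \lambda$ separately, where $f_{\tilde{A}}(\lambda) = \tazero^\top \tainf = (\azero^{(1)})^\top \ainf^{(1)} = \hazero^\top \hainf = \hat{f}(\lambda)$. Since $x$ was arbitrary, $\tilde{f} = \hat{f}$.

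There is no real obstacle here: the only content is the observation that block-diagonal matrices multiply blockwise and that zeroing out the lower block of $\tainf$ kills the $(22)$-component, so $\tilde{A}$ is essentially $\hat{A}$ padded with $n - \hat{n}$ ``dead'' states that are never read out. The mild point to be careful about is keeping the row/column conventions for $\azero$ consistent with the rest of the appendix (initial weights as row vectors), but this is purely cosmetic. The reason $\tilde{A}$ is introduced at all is that having $n$ states (matching $A'$) makes it possible to later combine $A'$ and $\tilde{A}$ into a single WFA over a common state space for the error analysis, so the claim is just a bookkeeping lemma rather than a substantive one.
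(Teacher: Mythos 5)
Your proof is correct and is exactly the ``simple calculation'' the paper alludes to but omits: block-diagonal transition matrices multiply blockwise, and the zero lower block of $\tainf$ annihilates the $(22)$-component, leaving precisely $\hazero^\top \hA_x \hainf = \hat{f}(x)$. No gaps; this matches the paper's intended argument.
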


By combining $A$ and $\tilde{A}$ we can obtain a WFA computing squares of
differences between $f$ and $\hat{f}$. The construction is given in the
following claim, which follows from the same argument used in the proof of
Lemma~\ref{lem:innerprod}.
\begin{claim}
Let $B = {\langle \bzero^{\otimes 2}, \binf^{\otimes 2}, \{\B_\sigma^{\otimes 2}\}
\rangle}$ be the WFA with $4 n^2$ states constructed from
\begin{align*}
\bzero &= \left[ \azero \; -\tazero \right] \enspace,  \\
\binf &= \left[ \begin{array}{c} \ainf \\ \tainf \end{array} \right] \enspace,
\\
\B_\sigma &=  \left[ \begin{array}{cc} \A_\sigma & \mat{0} \\ \mat{0} &
\tA_\sigma \end{array} \right] = \diag(\A_\sigma,\tA_\sigma) \enspace.
\end{align*}
Then $f_B = (f - \tilde{f})^2$.
\end{claim}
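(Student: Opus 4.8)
The plan is to factor the construction of $B$ through an intermediate WFA of dimension $2n$, exactly as in the proof of Lemma~\ref{lem:innerprod}: first build a WFA $C$ computing the difference $f - \tilde{f}$, and then observe that $B$ is obtained from $C$ by replacing each weight with its Kronecker square, an operation that turns the realized function into its pointwise square.

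Concretely, I would set $C = \langle \bzero, \binf, \{\B_\sigma\} \rangle$ with the weights given in the statement, so that $B = \langle \bzero^{\otimes 2}, \binf^{\otimes 2}, \{\B_\sigma^{\otimes 2}\} \rangle$ has for its weights precisely the Kronecker squares of the weights of $C$. Since $\B_\sigma = \diag(\A_\sigma, \tA_\sigma)$ and block-diagonal matrices multiply block-wise, for every string $x = x_1 \cdots x_t$ we get $\B_x = \B_{x_1} \cdots \B_{x_t} = \diag(\A_x, \tA_x)$. Substituting $\bzero = [\azero \; -\tazero]$ and $\binf = [\ainf;\, \tainf]$ and expanding the block product then gives
\[
f_C(x) = \bzero^\top \B_x \binf = \azero^\top \A_x \ainf - \tazero^\top \tA_x \tainf = f(x) - \tilde{f}(x),
\]
using that $\twa$ realizes $\tilde{f}$ (the previous claim).

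It then remains to verify $f_B = f_C^2$ pointwise. For this I would invoke two standard Kronecker identities: the mixed-product rule $(\mM \otimes \mN)(\mP \otimes \mQ) = (\mM \mP) \otimes (\mN \mQ)$, which gives $\B_{x_1}^{\otimes 2} \cdots \B_{x_t}^{\otimes 2} = (\B_{x_1} \cdots \B_{x_t})^{\otimes 2} = \B_x^{\otimes 2}$; and the scalar identity $(\v \otimes \v)^\top (\mM \otimes \mM)(\mat{w} \otimes \mat{w}) = (\v^\top \mM \mat{w})^2$. Applying the latter with $\v = \bzero$, $\mM = \B_x$, and $\mat{w} = \binf$ yields
\[
f_B(x) = (\bzero^{\otimes 2})^\top \B_x^{\otimes 2} \binf^{\otimes 2} = (\bzero^\top \B_x \binf)^2 = \big(f(x) - \tilde{f}(x)\big)^2
\]
for every $x \in \sstar$, which is exactly the claim.

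I do not expect any genuine obstacle here: this is the same algebraic manipulation already carried out in the proof of Lemma~\ref{lem:innerprod}, with the single function $f - \tilde{f}$ playing the role of both arguments of the inner product there, and unlike that lemma the present statement is purely pointwise, so none of the convergence or spectral-radius considerations enter. The only points needing mild care are the Kronecker bookkeeping — the order of factors in the mixed-product rule and the resulting dimension $4n^2 = (2n)^2$ — and keeping the row/column conventions for the initial and final weight vectors consistent with the rest of the appendix.
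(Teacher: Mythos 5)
Your proof is correct and follows exactly the route the paper intends: the paper gives no separate argument for this claim, stating only that it ``follows from the same argument used in the proof of Lemma~\ref{lem:innerprod}'', which is precisely your factorization through the intermediate $2n$-state WFA $C$ computing $f - \tilde{f}$ followed by the Kronecker-square/mixed-product identity. Nothing is missing.
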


From the weights of automaton $B$ we define the following vectors and matrices:
\begin{align*}
\czero &= \azero \otimes [\azero \; -\tazero ] = \azero \otimes \bzero \enspace, \\
\cinf &= \ainf \otimes \left[ \begin{array}{c} \ainf \\ \tainf \end{array}
\right] = \ainf \otimes \binf \enspace, \\
\tcinf &= \tainf \otimes \left[ \begin{array}{c} \ainf \\ \tainf \end{array}
\right] = \tainf \otimes \binf \enspace, \\
\C_\sigma &= \A_\sigma \otimes \left[ \begin{array}{cc} \A_\sigma & \mat{0}
\enspace, \\
\mat{0} & \tA_\sigma \end{array} \right] = \A_\sigma \otimes \B_\sigma \enspace, \\
\tC_\sigma &= \tA_\sigma \otimes \left[ \begin{array}{cc} \A_\sigma & \mat{0}
\\ \mat{0} & \tA_\sigma \end{array} \right] = \tA_\sigma \otimes \B_\sigma
\enspace.
\end{align*}
We will also write $\C = \sum_\sigma \C_\sigma$ and $\tC = \sum_\sigma
\tC_\sigma$.
This notation lets us state the following claim, which will be the starting
point of our bounds.
The result follows from the same calculations used in the proof of
Lemma~\ref{lem:innerprod} and the observation that $\bzero^{\otimes 2} = [\czero
\; -\czero]$.

\begin{claim}
For any $t \geq 0$ we have
\begin{equation*}
\Delta_t  = \sum_{x \in \Sigma^t} (f(x) - \hat{f}(x))^2 = \czero \left( \C^t \cinf - \tC^t
\tcinf \right) \enspace.
\end{equation*}
\end{claim}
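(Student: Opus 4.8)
The plan is to unfold $f_B$ using the Kronecker-product identities already exploited in the proof of Lemma~\ref{lem:innerprod}, and then sum over $\Sigma^t$. Since $B = \langle \bzero^{\otimes 2}, \binf^{\otimes 2}, \{\B_\sigma^{\otimes 2}\} \rangle$, for a word $x = x_1 \cdots x_t$ we have $f_B(x) = \bzero^{\otimes 2}\, \B_{x_1}^{\otimes 2} \cdots \B_{x_t}^{\otimes 2}\, \binf^{\otimes 2}$, and the mixed-product property of $\otimes$ gives $\B_{x_1}^{\otimes 2} \cdots \B_{x_t}^{\otimes 2} = \B_x \otimes \B_x$, where $\B_x = \B_{x_1} \cdots \B_{x_t}$.

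The first step is to record the block structure of each of the three pieces. Using $\tazero = \azero$ one checks $\bzero^{\otimes 2} = [\czero \; -\czero]$, which is the observation cited in the statement; stacking $\ainf$ over $\tainf$ gives $\binf^{\otimes 2} = \binf \otimes \binf$ with top block $\ainf \otimes \binf = \cinf$ and bottom block $\tainf \otimes \binf = \tcinf$; and since $\B_\sigma = \diag(\A_\sigma,\tA_\sigma)$, the elementary identity $\diag(\mat{X},\mat{Y}) \otimes \mat{Z} = \diag(\mat{X}\otimes\mat{Z}, \mat{Y}\otimes\mat{Z})$ yields $\B_\sigma^{\otimes 2} = \B_\sigma \otimes \B_\sigma = \diag(\A_\sigma \otimes \B_\sigma,\, \tA_\sigma \otimes \B_\sigma) = \diag(\C_\sigma,\tC_\sigma)$. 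A product of block-diagonal matrices being block-diagonal blockwise, this propagates to $\B_x \otimes \B_x = \diag(\C_x, \tC_x)$ with $\C_x = \C_{x_1}\cdots\C_{x_t}$ and $\tC_x = \tC_{x_1}\cdots\tC_{x_t}$.

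Substituting these three block decompositions into $f_B(x)$ will collapse the row-vector / block-diagonal / column-vector product to $f_B(x) = \czero\,\C_x\,\cinf - \czero\,\tC_x\,\tcinf = \czero(\C_x\cinf - \tC_x\tcinf)$; incidentally this re-derives $f_B(x) = (f(x)-\hat f(x))^2$, since $\bzero\,\B_x\,\binf = \azero\A_x\ainf - \tazero\tA_x\tainf = f(x)-\hat f(x)$. Summing over $x \in \Sigma^t$ and invoking the simple induction $\sum_{x\in\Sigma^t}\C_x = \big(\sum_\sigma \C_\sigma\big)^t = \C^t$ (and identically for $\tC$) then gives $\Delta_t = \czero(\C^t\cinf - \tC^t\tcinf)$, as claimed.

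There is no genuine obstacle here; as the statement anticipates, this is a routine computation. The only place meriting care is the Kronecker-product bookkeeping against block-diagonal matrices — verifying $\diag(\mat{X},\mat{Y})\otimes\mat{Z} = \diag(\mat{X}\otimes\mat{Z},\mat{Y}\otimes\mat{Z})$ and checking that the sub-blocks $\C_x,\tC_x$ and the sub-vectors $\cinf,\tcinf$ align with the row-vector convention adopted for initial weights.
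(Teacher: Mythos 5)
Your proof is correct and follows exactly the route the paper intends: the paper's own justification is a one-line appeal to ``the same calculations used in the proof of Lemma~\ref{lem:innerprod} and the observation that $\bzero^{\otimes 2} = [\czero \; -\czero]$,'' and your argument is precisely the detailed unfolding of that remark via the mixed-product property and the block-diagonal decomposition $\B_\sigma^{\otimes 2} = \diag(\C_\sigma, \tC_\sigma)$. Nothing is missing.
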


Note that we can write the error we want to bound as $\norm{f - \hat{f}}_2^2 =
\sum_{t \geq 0} \Delta_t$. Our strategy will be to obtain a separate bound for
each $\Delta_t$ and then sum them all together.
We start by bounding $|\Delta_t|$ in terms of the norms of the matrices and
vectors defined above.

\begin{lemma}\label{lem:bounddeltat}
For any $t \geq 0$ the following bound holds:
\begin{align*}
|\Delta_t| &\leq
\norm{\czero} \norm{\tC}^t \norm{\cinf - \tcinf} \\
&+ t \norm{\czero} \norm{\cinf} \max\{\norm{\C},\norm{\tC}\}^{t-1} \norm{\C -
\tC} \enspace.
\end{align*}
\end{lemma}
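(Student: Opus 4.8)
The plan is to start from the identity $\Delta_t = \czero(\C^t\cinf - \tC^t\tcinf)$ and rewrite the difference $\C^t\cinf - \tC^t\tcinf$ as a telescoping sum so that we isolate the contribution of the discrepancy in the final-weight vectors from the contribution of the discrepancy in the transition matrices. Concretely, I would insert the hybrid term $\C^t\tcinf$ and write
\begin{equation*}
\C^t\cinf - \tC^t\tcinf = \C^t(\cinf - \tcinf) + (\C^t - \tC^t)\tcinf \enspace.
\end{equation*}
The first summand is handled immediately by submultiplicativity (property~1 of the norm list) and $\norm{\cinf-\tcinf}$, giving the term $\norm{\czero}\,\norm{\C}^t\,\norm{\cinf-\tcinf}$; note however that the claimed bound has $\norm{\tC}^t$ here rather than $\norm{\C}^t$, so in fact I would pair $\cinf$ (not $\tcinf$) with $\C^t$ in the telescoping, i.e.\ insert $\tC^t\cinf$ instead: $\C^t\cinf - \tC^t\tcinf = (\C^t-\tC^t)\cinf + \tC^t(\cinf-\tcinf)$. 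Then the ``final-weight'' term is $\norm{\czero}\,\norm{\tC}^t\,\norm{\cinf-\tcinf}$, matching the statement.

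For the remaining term $(\C^t - \tC^t)\cinf$, the standard device is the matrix telescoping identity
\begin{equation*}
\C^t - \tC^t = \sum_{j=0}^{t-1} \C^{j}(\C - \tC)\tC^{t-1-j} \enspace,
\end{equation*}
which is verified by induction on $t$ (or simply by expanding the sum and cancelling). Taking norms and using submultiplicativity gives
\begin{equation*}
\norm{\C^t - \tC^t} \leq \sum_{j=0}^{t-1} \norm{\C}^{j}\,\norm{\C-\tC}\,\norm{\tC}^{t-1-j} \leq t\,\max\{\norm{\C},\norm{\tC}\}^{t-1}\,\norm{\C-\tC} \enspace,
\end{equation*}
since each of the $t$ summands is at most $\max\{\norm{\C},\norm{\tC}\}^{j+(t-1-j)} = \max\{\norm{\C},\norm{\tC}\}^{t-1}$ times $\norm{\C-\tC}$. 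Combining this with $|\Delta_t| \le \norm{\czero}\,\norm{\C^t\cinf - \tC^t\tcinf}$ and the triangle inequality applied to the two-term split above yields exactly the claimed inequality.

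There is no real obstacle here: the lemma is a routine application of the triangle inequality, submultiplicativity of the induced operator norm, and the telescoping identity for $\C^t - \tC^t$. The only point requiring a modicum of care is bookkeeping the two hybrid terms in the right order so that the exponents on $\norm{\tC}$ and on $\max\{\norm{\C},\norm{\tC}\}$ come out as stated; in particular one should telescope $\C^t\cinf - \tC^t\tcinf$ by first peeling off $(\C^t - \tC^t)\cinf$ and then $\tC^t(\cinf - \tcinf)$, rather than the other way around. All the norm properties invoked (submultiplicativity) are listed at the start of the appendix, so nothing beyond elementary matrix-norm manipulation is needed.
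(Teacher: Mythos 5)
Your proof is correct and is essentially the paper's argument in closed form: the paper proves the same bound by induction on $t$, inserting the same hybrid terms (first $\tC^{t+1}(\cinf-\tcinf)$, then peeling one factor of $\C-\tC$ off $\C^{t+1}-\tC^{t+1}$ at each step), which is exactly the unrolled version of your telescoping identity $\C^t-\tC^t=\sum_{j=0}^{t-1}\C^{j}(\C-\tC)\tC^{t-1-j}$. Your direct summation is marginally cleaner --- it avoids the paper's recursive bookkeeping through the auxiliary quantity $\Delta_t(\tC\cinf,\tC\cinf)$ --- but the decomposition and the norm estimates are the same.
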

\begin{proof}
Let us start by remarking that the proof of the bound does not depend in any way
on the identity of the vectors and matrices involved. Thus, we shall write
$\Delta_t = \Delta_t(\cinf,\tcinf)$, which will allow us later in the proof to
change the identity of the vectors $\cinf$ and $\tcinf$ appearing in the bound.

Now we proceed by induction on $t$. The base case $t = 0$ is immediate since
\begin{equation*}
|\Delta_0| = |\czero (\cinf - \tcinf)| \leq \norm{\czero} \norm{\cinf
- \tcinf} \enspace.
\end{equation*}
Now assume the bound is true for $\Delta_t$. For the case $t+1$ repeated
applications of the triangle inequality yield:
\begin{align*}
|\Delta_{t+1}|
&=
| \czero ( \C^{t+1} \cinf - \tC^{t+1} \tcinf )| \\
&\leq
|\czero \tC^{t+1} (\cinf - \tcinf)|
\\ &+
|\czero (\C^{t+1} - \tC^{t+1}) \cinf|
\\ &\leq 
|\czero \tC^{t+1} (\cinf - \tcinf)|
\\ &+
|\czero \C^t (\C - \tC) \cinf|
\\ &+
|\czero (\C^t - \tC^t) \tC \cinf|
\\ &\leq
\norm{\czero} \norm{\tC}^{t+1} \norm{\cinf - \tcinf}
\\ &+
\norm{\czero} \norm{\C}^t \norm{\C - \tC} \norm{\cinf}
\\ &+
|\Delta_t(\tC \cinf, \tC \cinf)| \enspace.
\refstepcounter{equation}\tag{\theequation}\label{eqn:deltat1}
\end{align*}
Let $\Delta'_t = \Delta_t(\tC \cinf, \tC \cinf)$. By our inductive hypothesis we
have
\begin{align*}
|\Delta'_t|
&\leq
\norm{\czero} \norm{\tC}^t \norm{\tC \cinf - \tC \cinf}
\\ &+
t \norm{\czero} \norm{\tC \cinf} \max\{\norm{\C},\norm{\tC}\}^{t-1}
\norm{\C - \tC}
\\
&\leq
t \norm{\czero} \norm{\cinf} \max\{\norm{\C},\norm{\tC}\}^{t}
\norm{\C - \tC}
\enspace.
\end{align*}
Finally, plugging this bound into \eqref{eqn:deltat1} we get
\begin{align*}
|\Delta_{t+1}|
&\leq
\norm{\czero} \norm{\tC}^{t+1} \norm{\cinf - \tcinf}
\\ &+
\norm{\czero} \norm{\C}^t \norm{\C - \tC} \norm{\cinf}
\\ &+
t \norm{\czero} \norm{\cinf} \max\{\norm{\C},\norm{\tC}\}^{t}
\norm{\C - \tC}
\\ &=
\norm{\czero} \norm{\tC}^{t+1} \norm{\cinf - \tcinf}
\\ &+
(t+1) \norm{\czero} \norm{\cinf} \max\{\norm{\C},\norm{\tC}\}^{t}
\norm{\C - \tC}
\enspace.
\end{align*}
\end{proof}

Now we proceed to bound all the terms that appear in this bound.
First note that the bounds in the following claim are clear by definition.

\begin{claim}
We have $\norm{\czero} = \sqrt{2} \norm{\azero}^2$ and $\norm{\cinf} \leq
\sqrt{2} \norm{\ainf}^2$.
\end{claim}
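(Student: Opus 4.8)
The plan is to unfold the definitions of $\czero$ and $\cinf$ as Kronecker products and then exploit two elementary facts: (i) the Euclidean norm is multiplicative over Kronecker products, which specialized to vectors reads $\norm{\mat{u} \otimes \mat{v}} = \norm{\mat{u}}\,\norm{\mat{v}}$ (this is property 5 of the norm facts recalled at the start of the appendix); and (ii) the squared Euclidean norm of a vector obtained by stacking two blocks equals the sum of the squared norms of the blocks. Neither step involves anything delicate, so I would proceed as a short direct calculation.

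For the first identity I would recall that $\czero = \azero \otimes \bzero$ with $\bzero = [\,\azero\ \ -\tazero\,]$, and that by the construction of $\tilde{A}$ we have $\tazero = \azero$. Hence $\bzero = [\,\azero\ \ -\azero\,]$ and $\norm{\bzero}^2 = \norm{\azero}^2 + \norm{-\azero}^2 = 2\norm{\azero}^2$, i.e.\ $\norm{\bzero} = \sqrt{2}\,\norm{\azero}$. Multiplicativity over the Kronecker product then yields $\norm{\czero} = \norm{\azero}\,\norm{\bzero} = \sqrt{2}\,\norm{\azero}^2$, which is exactly the stated equality.

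For the second identity I would recall $\cinf = \ainf \otimes \binf$ with $\binf = [\,\ainf\,;\,\tainf\,]$ and $\tainf = [\,\ainf^{(1)}\,;\,\mat{0}\,]$. Since $\tainf$ is just $\ainf^{(1)}$ padded with zeros, $\norm{\tainf} = \norm{\ainf^{(1)}} \leq \norm{\ainf}$. Therefore $\norm{\binf}^2 = \norm{\ainf}^2 + \norm{\tainf}^2 \leq 2\norm{\ainf}^2$, and applying multiplicativity again gives $\norm{\cinf} = \norm{\ainf}\,\norm{\binf} \leq \sqrt{2}\,\norm{\ainf}^2$, as claimed.

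There is no genuine obstacle in this argument; the only point worth flagging explicitly is the asymmetry between the two halves of the statement. The first half is an equality because $\bzero$ simply repeats $\azero$ up to sign, whereas the second is only an inequality precisely because $\tainf$ is the zero-padded truncation of $\ainf$ to its first $\hat n$ coordinates, so that $\norm{\tainf} \leq \norm{\ainf}$ with equality exactly when $\ainf^{(2)} = \mat{0}$.
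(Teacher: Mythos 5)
Your proof is correct and is exactly the calculation the paper has in mind: the paper states this claim with no written proof (it is presented as ``clear by definition''), and the intended justification is precisely the Kronecker-product multiplicativity of the Euclidean norm together with the block structure of $\bzero$ and $\binf$, as you carry out. Your remark on why one half is an equality and the other only an inequality is a correct and useful observation.
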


The next step is to bound the term involving the norms $\norm{\C}$ and
$\norm{\tC}$. This will lead to the definition of a representation independent
parameter we call $\rho_f$.

\begin{lemma}
Let $\rho_f = \norm{\sum_\sigma \A_\sigma \otimes \A_\sigma}$. Then $\rho_f$ is
a positive constant depending only on $f$ which satisfies:
\begin{equation*}
\norm{\tC} \leq \norm{\C} = \rho_f \enspace.
\end{equation*}
\end{lemma}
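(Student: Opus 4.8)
The plan is to reduce everything to the single matrix $\A^\otimes = \sum_\sigma \A_\sigma \otimes \A_\sigma$, so that $\rho_f = \norm{\A^\otimes}$, together with the two coordinate projections $\mat{\Pi} = [\mI_{\hat{n}} \; \mat{0}]$ and $\mat{\Pi}' = [\mat{0} \; \mI_{n-\hat{n}}]$ (so $\A_\sigma^{(11)} = \mat{\Pi} \A_\sigma \mat{\Pi}^\top$ and $\A_\sigma^{(22)} = \mat{\Pi}' \A_\sigma \mat{\Pi}'^\top$). Before that, two side remarks. Since we may assume $A$ is in SVA form, the $\A_\sigma$ are determined by $f$ up to conjugation by an orthogonal matrix $\mQ$ (sign flips of coordinates, or rotations inside blocks of equal singular values), under which $\A^\otimes \mapsto (\mQ \otimes \mQ)^\top \A^\otimes (\mQ \otimes \mQ)$ with $\mQ \otimes \mQ$ again orthogonal; by property~(2) this leaves $\rho_f = \norm{\A^\otimes}$ unchanged, so $\rho_f$ depends only on $f$. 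It is positive because $\A^\otimes = \mat{0}$ would force $\sum_\sigma \A_\sigma \A_\sigma^\top = \mat{0}$ (apply $\A^\otimes$ to $\vecm(\mI)$), hence $\A_\sigma = \mat{0}$ for all $\sigma$, hence $\rank(\H_f) \leq 1$, contradicting $\rank(f) = n \geq 2$. For the inequality $\norm{\tC} \leq \norm{\C} = \rho_f$ the idea is to recognize both $\C = \sum_\sigma \A_\sigma \otimes \B_\sigma$ and $\tC = \sum_\sigma \tA_\sigma \otimes \B_\sigma$ as matrices obtained from $\A^\otimes$ by applying a finite sequence of norm-non-increasing operations, with $\C$ additionally \emph{containing} an untouched copy of $\A^\otimes$ as a diagonal block.

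The three operations I will use are: (i) conjugation by a permutation matrix, which \emph{preserves} the norm (property~(2)); (ii) restricting a block-diagonal matrix $\diag(\mat{M},\mat{N})$ to one of its two blocks, which does not increase the norm (property~(3)); and (iii) compression $\mat{X} \mapsto (\mat{E}_1 \otimes \mat{E}_2)\, \mat{X}\, (\mat{E}_1 \otimes \mat{E}_2)^\top$ with each $\mat{E}_i \in \{\mI_n, \mat{\Pi}, \mat{\Pi}'\}$, which does not increase the norm because $\norm{\mat{E}_1 \otimes \mat{E}_2} \leq 1$ and $\norm{(\mat{E}_1 \otimes \mat{E}_2)^\top} \leq 1$ (each of $\mI_n, \mat{\Pi}, \mat{\Pi}'$ and their transposes has norm at most one, so this follows from property~(5)) and then by submultiplicativity (property~(1)). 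Any matrix reachable from $\A^\otimes$ by a finite sequence of operations of these three kinds therefore has operator norm at most $\rho_f$.

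It then remains to unwind $\C$ and $\tC$ using the elementary Kronecker identities $\diag(\mat{B}_1,\mat{B}_2) \otimes \mat{X} = \diag(\mat{B}_1 \otimes \mat{X},\, \mat{B}_2 \otimes \mat{X})$, $\mat{X} \otimes \diag(\mat{B}_1,\mat{B}_2) = \mat{P}\, \diag(\mat{X} \otimes \mat{B}_1,\, \mat{X} \otimes \mat{B}_2)\, \mat{P}^\top$ for a permutation matrix $\mat{P}$ depending only on the block sizes, and the mixed-product rule $(\mat{M}_1 \otimes \mat{N}_1)(\mat{M}_2 \otimes \mat{N}_2) = (\mat{M}_1 \mat{M}_2) \otimes (\mat{N}_1 \mat{N}_2)$, in combination with $\B_\sigma = \diag(\A_\sigma, \tA_\sigma)$ and $\tA_\sigma = \diag(\A_\sigma^{(11)}, \A_\sigma^{(22)})$. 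Expanding the right factor of $\C$ shows $\C$ is conjugate by a permutation to $\diag(\A^\otimes,\ \sum_\sigma \A_\sigma \otimes \tA_\sigma)$, and one further expansion exhibits $\sum_\sigma \A_\sigma \otimes \tA_\sigma$, up to a permutation, as a block-diagonal matrix whose two blocks are $(\mI_n \otimes \mat{E})\, \A^\otimes\, (\mI_n \otimes \mat{E})^\top$ with $\mat{E} \in \{\mat{\Pi}, \mat{\Pi}'\}$ — instances of operation~(iii). In the same way, factoring the block structure of the left factor $\tA_\sigma$ out of $\tC$ and then expanding the right factor shows that every diagonal block of $\tC$ is, after a permutation, of the form $(\mat{E}_1 \otimes \mat{E}_2)\, \A^\otimes\, (\mat{E}_1 \otimes \mat{E}_2)^\top$ with $\mat{E}_1 \in \{\mat{\Pi}, \mat{\Pi}'\}$ and $\mat{E}_2 \in \{\mI_n, \mat{\Pi}, \mat{\Pi}'\}$. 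By the previous paragraph each such block has norm at most $\rho_f$, so properties~(2) and~(3) yield $\norm{\tC} \leq \rho_f$ and $\norm{\C} \leq \rho_f$; and since the first diagonal block of $\C$ equals $\A^\otimes$ exactly, property~(3) also gives $\norm{\C} \geq \norm{\A^\otimes} = \rho_f$, hence $\norm{\C} = \rho_f$. I do not anticipate a genuine obstacle — the entire content is in the three norm inequalities above — but the one point that needs care is checking that each permutation matrix $\mat{P}$ produced by the identity $\mat{X} \otimes \diag(\mat{B}_1,\mat{B}_2) = \mat{P}\, \diag(\cdots)\, \mat{P}^\top$ depends only on $\hat{n}$ and $n$ (not on $\sigma$), so that it can be pulled outside the sums over $\sigma$.
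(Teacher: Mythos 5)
Your proposal is correct and follows essentially the same route as the paper's proof: both use the block-diagonal structure of $\B_\sigma$ and $\tA_\sigma$ together with permutation-equivalence of Kronecker factors to exhibit $\C$ and $\tC$ (up to permutation conjugation) as block-diagonal matrices whose blocks are principal submatrices of $\A^\otimes = \sum_\sigma \A_\sigma \otimes \A_\sigma$, with $\C$ containing $\A^\otimes$ itself as a block, which forces $\norm{\C} = \rho_f$ and $\norm{\tC} \leq \rho_f$. Your phrasing of ``submatrix'' as compression by the projections $\mat{\Pi}, \mat{\Pi}'$, and your explicit checks of positivity and of invariance of $\rho_f$ under the residual orthogonal freedom in the SVA, are just more detailed versions of steps the paper states without proof.
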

\begin{proof}
We start by noting that $\norm{\C} = \max\{\norm{\sum_\sigma \A_\sigma \otimes
\A_\sigma}, \norm{\sum_\sigma \A_\sigma \otimes \tA_\sigma}\}$.
Then we use $\norm{\sum_\sigma \A_\sigma \otimes \tA_\sigma} = \norm{\sum_\sigma
\tA_\sigma \otimes \A_\sigma} = \max\{ \norm{\sum_\sigma \A_\sigma^{(11)}
\otimes \A_\sigma}, \norm{\sum_\sigma \A_\sigma^{(22)}
\otimes \A_\sigma} \}$ to show that $\norm{\C} = \norm{\sum_\sigma \A_\sigma
\otimes \A_\sigma}$, since the rest of terms in the maximum correspond to norms
of submatrices of $\sum_\sigma \A_\sigma \otimes \A_\sigma$.

Now a similar argument can be used to show that
\begin{align*}
\norm{\tC} &= \max\{
\norm{\sum_\sigma \A_\sigma^{(11)} \otimes \A_\sigma},
\norm{\sum_\sigma \A_\sigma^{(11)} \otimes \tA_\sigma},
\\ &
\norm{\sum_\sigma \A_\sigma^{(22)} \otimes \A_\sigma},
\norm{\sum_\sigma \A_\sigma^{(22)} \otimes \tA_\sigma}
\}
\\ &\leq
\norm{\sum_\sigma \A_\sigma \otimes \A_\sigma} \enspace.
\end{align*}

Note $\rho_f$ is representation independent because it only depends on the
transition weights of the SVA form of $f$, which is unique.

%
\end{proof}

In the remaining of the proof we will assume the following holds.
\begin{assumption}\label{ass:rhof}
The SVA $A$ is such that $\rho_f = \norm{\sum_\sigma \A_\sigma^{\otimes 2}} <
1$.
\end{assumption}
This assumption is not essential, and is only introduced to simplify the
calculations involved in the proof. See the remarks at the end of this appendix
for a discussion on how to remove the assumption.

In order to bound $\norm{\cinf - \tcinf}$ and $\norm{\C - \tC}$ we will make an
extensive use of the properties of SVA given in Lemma~\ref{lem:svaequations}.
This will allow us to obtain bounds that only depend on the Hankel singular
values of $f$, which are intrinsic representation-independent quantities
associated with $f$.

\begin{lemma}
\begin{equation*}
\norm{\cinf - \tcinf} \leq \sqrt{2} \norm{\ainf} \sqrt{\sval_{\hat{n}+1} +
\cdots + \sval_{n}} \enspace.
\end{equation*}
\end{lemma}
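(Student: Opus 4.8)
The plan is to unfold the Kronecker-product definitions of $\cinf$ and $\tcinf$ and then reduce everything to the second identity of Lemma~\ref{lem:svaequations}. Recall that $\cinf = \ainf \otimes \binf$ and $\tcinf = \tainf \otimes \binf$, so $\cinf - \tcinf = (\ainf - \tainf) \otimes \binf$. By property~5 of the matrix norms collected at the start of the appendix (norms are multiplicative under Kronecker products), this immediately gives $\norm{\cinf - \tcinf} = \norm{\ainf - \tainf}\,\norm{\binf}$, and it remains to bound the two factors.

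For $\norm{\binf}$, recall $\binf$ is the vertical stacking of $\ainf$ and $\tainf$, so $\norm{\binf}^2 = \norm{\ainf}^2 + \norm{\tainf}^2$. Since (by the claim constructing $\tilde{A}$) $\tainf$ is obtained from $\ainf$ by zeroing out its last $n-\hat{n}$ coordinates, we have $\norm{\tainf} \le \norm{\ainf}$, hence $\norm{\binf} \le \sqrt{2}\,\norm{\ainf}$. For the other factor, $\ainf - \tainf$ has zeros in its first $\hat{n}$ coordinates and agrees with $\ainf$ on the rest, so $\norm{\ainf - \tainf}^2 = \sum_{i=\hat{n}+1}^{n}\ainf(i)^2$.

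The only step that is more than bookkeeping is controlling $\ainf(i)^2$ for $i > \hat{n}$, and here I would invoke the second identity of Lemma~\ref{lem:svaequations}: for every $i \in [n]$, $\ainf(i)^2 = \sval_i - \sum_{j}\sval_j\sum_{\sigma}\A_\sigma(i,j)^2 \le \sval_i$, since the subtracted term is a sum of nonnegative quantities (all $\sval_j > 0$). Summing over $i = \hat{n}+1,\ldots,n$ gives $\norm{\ainf - \tainf}^2 \le \sval_{\hat{n}+1} + \cdots + \sval_n$. Multiplying the two bounds yields $\norm{\cinf - \tcinf} \le \sqrt{2}\,\norm{\ainf}\,\sqrt{\sval_{\hat{n}+1} + \cdots + \sval_n}$, as claimed.

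I expect no real obstacle here once Lemma~\ref{lem:svaequations} is available: the argument is a short computation plus the observation that the SVA equations force $\ainf(i)^2 \le \sval_i$. The one point requiring care is to respect the convention fixed just before the claim constructing $\tilde{A}$ — initial weights are written as rows and final weights remain columns — so that the block decompositions $\ainf = [\ainf^{(1)};\ainf^{(2)}]$ and $\tainf = [\ainf^{(1)};\mat{0}]$ are along the appropriate (column) dimension and the Euclidean norms split as above.
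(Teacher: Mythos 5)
Your proof is correct and follows essentially the same route as the paper's: factor $\cinf - \tcinf = (\ainf - \tainf)\otimes\binf$, use multiplicativity of the norm under Kronecker products, bound $\norm{\binf}\le\sqrt{2}\norm{\ainf}$, and control $\norm{\ainf^{(2)}}$ via the estimate $\ainf(i)^2\le\sval_i$ from the second identity of Lemma~\ref{lem:svaequations}. No gaps.
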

\begin{proof}
We start by unwinding the definitions of $\cinf$ and $\tcinf$ to obtain the
bound:
\begin{align*}
\norm{\cinf - \tcinf}
&= \norm{(\ainf - \tainf) \otimes \binf}
\\ &=
\norm{\ainf - \tainf} \norm{\binf}
\\ &=
\norm{\ainf^{(2)}} \sqrt{\norm{\ainf}^2 + \norm{\tainf}^2}
\\ &\leq
\sqrt{2} \norm{\ainf} \norm{\ainf^{(2)}} \enspace,
\end{align*}
where we used that $\norm{\tainf} \leq \norm{\ainf}$.
Now note that Lemma~\ref{lem:svaequations} yields the crude estimate $\ainf(i)^2
\leq \sval_i$ for all $i \in [n]$. We use this last observation to obtain the
following bound and complete the proof:
\begin{equation*}
\norm{\ainf^{(2)}} = \sqrt{\sum_{i=\hat{n}+1}^n \ainf(i)^2} \leq
\sqrt{\sval_{\hat{n}+1} + \cdots + \sval_n} \enspace.
\end{equation*}
\end{proof}

\begin{lemma}
\begin{equation*}
\norm{\C - \tC} \leq \sqrt{\sum_\sigma \norm{\A_\sigma}^2}
\sqrt{\frac{\sval_{\hat{n}+1} + \cdots + \sval_{n}}{\sval_{\hat{n}}}} \enspace.
\end{equation*}
\end{lemma}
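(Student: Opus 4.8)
The plan is to exploit the block structure of $\C - \tC$ to reduce the estimate to the operator norms of two off-diagonal blocks, and then to bound each block using Lemma~\ref{lem:svaequations}.

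First I would note that $\C$ and $\tC$ share the same second Kronecker factor, so $\C - \tC = \sum_\sigma (\A_\sigma - \tA_\sigma) \otimes \B_\sigma$. Since $\tA_\sigma = \diag(\A_\sigma^{(11)}, \A_\sigma^{(22)})$, the matrix $\A_\sigma - \tA_\sigma$ is block-antidiagonal with off-diagonal blocks $\A_\sigma^{(12)}$ and $\A_\sigma^{(21)}$, and tensoring with $\B_\sigma$ preserves this pattern. Hence, with respect to the decomposition $\R^n \otimes \R^{2n} = (\R^{\hat{n}} \otimes \R^{2n}) \oplus (\R^{n - \hat{n}} \otimes \R^{2n})$,
\begin{equation*}
\C - \tC = \begin{bmatrix} \mat{0} & \sum_\sigma \A_\sigma^{(12)} \otimes \B_\sigma \\ \sum_\sigma \A_\sigma^{(21)} \otimes \B_\sigma & \mat{0} \end{bmatrix} \enspace.
\end{equation*}
A block-antidiagonal matrix $\left[\begin{smallmatrix} \mat{0} & \mM \\ \mN & \mat{0}\end{smallmatrix}\right]$ has operator norm $\max\{\norm{\mM}, \norm{\mN}\}$: squaring it produces $\diag(\mN^\top \mN, \mM^\top \mM)$, whose norm is $\max\{\norm{\mM}^2, \norm{\mN}^2\}$ by property~3. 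Therefore $\norm{\C - \tC} = \max\{\norm{\sum_\sigma \A_\sigma^{(12)} \otimes \B_\sigma}, \norm{\sum_\sigma \A_\sigma^{(21)} \otimes \B_\sigma}\}$. Working with a maximum rather than a sum of the two contributions is exactly what will keep the final constant sharp.

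Next I would bound each term separately. Combining the triangle inequality with $\norm{\mM \otimes \mN} = \norm{\mM}\norm{\mN}$ (property~5), $\norm{\mM} \le \norm{\mM}_F$ (property~4), and the bound $\norm{\B_\sigma} = \norm{\diag(\A_\sigma, \tA_\sigma)} \le \norm{\A_\sigma}$ (valid since $\A_\sigma^{(11)}$ and $\A_\sigma^{(22)}$ are submatrices of $\A_\sigma$), followed by the Cauchy--Schwarz inequality in the sum over $\sigma$, I would obtain
\begin{equation*}
\Bigl\| \sum_\sigma \A_\sigma^{(12)} \otimes \B_\sigma \Bigr\| \le \sum_\sigma \norm{\A_\sigma^{(12)}}_F \norm{\A_\sigma} \le \sqrt{\sum_\sigma \norm{\A_\sigma^{(12)}}_F^2}\, \sqrt{\sum_\sigma \norm{\A_\sigma}^2} \enspace,
\end{equation*}
and the analogous estimate for the $(21)$ term.

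The final and most substantive step is to show $\sum_\sigma \norm{\A_\sigma^{(12)}}_F^2 \le (\sval_{\hat{n}+1} + \cdots + \sval_n)/\sval_{\hat{n}}$, and the same bound for $\sum_\sigma \norm{\A_\sigma^{(21)}}_F^2$; this is where the SVA structure is used. For the $(12)$ block I would fix a column index $j \in \{\hat{n}+1, \ldots, n\}$ and invoke the first identity of Lemma~\ref{lem:svaequations}: discarding the nonnegative summands with $i > \hat{n}$ and using $\sval_i \ge \sval_{\hat{n}}$ for $i \le \hat{n}$ gives $\sval_{\hat{n}} \sum_{i \le \hat{n}} \sum_\sigma \A_\sigma(i,j)^2 \le \sval_j - \azero(j)^2 \le \sval_j$, and summing over $j > \hat{n}$ yields the bound. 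For the $(21)$ block the argument is symmetric: fix a row index $i > \hat{n}$, apply the second identity of Lemma~\ref{lem:svaequations}, and use $\sval_j \ge \sval_{\hat{n}}$ for $j \le \hat{n}$. Plugging these two bounds into the estimates of the previous paragraph and taking the maximum yields the claimed inequality. I do not expect any real obstacle here; the only point that needs care is the block-antidiagonal reduction in the first step, which is what prevents a spurious factor of $\sqrt{2}$ from appearing.
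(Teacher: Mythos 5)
Your proof is correct and follows essentially the same route as the paper's: the same block-antidiagonal decomposition of $\C - \tC$, the same reduction of its norm to $\max\{\norm{\sum_\sigma \A_\sigma^{(12)}\otimes\B_\sigma},\, \norm{\sum_\sigma \A_\sigma^{(21)}\otimes\B_\sigma}\}$, the same triangle-inequality/Cauchy--Schwarz/Frobenius chain, and the same use of the two identities of Lemma~\ref{lem:svaequations} to bound the off-diagonal Frobenius masses by $(\sval_{\hat{n}+1}+\cdots+\sval_n)/\sval_{\hat{n}}$. The only nitpick is that your ``squaring'' step should be $X^\top X$ rather than $X^2$ (plain squaring of $\bigl[\begin{smallmatrix}\mat{0}&\mM\\ \mN&\mat{0}\end{smallmatrix}\bigr]$ gives $\diag(\mM\mN,\mN\mM)$), but the fact you need, namely $\norm{\bigl[\begin{smallmatrix}\mat{0}&\mM\\ \mN&\mat{0}\end{smallmatrix}\bigr]}=\max\{\norm{\mM},\norm{\mN}\}$, is indeed true and matches the paper's conclusion obtained via unitary equivalence to a block-diagonal matrix.
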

\begin{proof}
Let $\mGa = \C - \tC = \sum_\sigma (\A_\sigma - \tA_\sigma) \otimes \B_\sigma$.
By expanding $\A_\sigma - \tA_\sigma$ in this expression one can see that
\begin{equation*}
\mGa = \left[ \begin{array}{cc} \mat{0} & \mGa^{(12)} \\ \mGa^{(21)} & \mat{0}
\end{array} \right] \enspace,
\end{equation*}
where $\mGa^{(ij)} = \sum_\sigma \A_\sigma^{(ij)} \otimes \B_\sigma$ for $ij \in
\{12, 21\}$. Since both $\mGa^{(12)}$ and $\mGa^{(21)}$ are unitarily equivalent
to block-diagonals matrices, $\mGa$ is also unitarily equivalent to a
block-diagonal matrix. Thus, we have
\begin{align*}
\norm{\mGa} &= \max \left\{
\norm{\sum_\sigma \A_\sigma^{(12)} \otimes \A_\sigma},
\norm{\sum_\sigma \A_\sigma^{(12)} \otimes \A_\sigma^{(11)}}, \right.
\\ &
\norm{\sum_\sigma \A_\sigma^{(12)} \otimes \A_\sigma^{(22)}},
\norm{\sum_\sigma \A_\sigma^{(21)} \otimes \A_\sigma},
\\ &
\left. \norm{\sum_\sigma \A_\sigma^{(21)} \otimes \A_\sigma^{(11)}},
\norm{\sum_\sigma \A_\sigma^{(21)} \otimes \A_\sigma^{(22)}} \right\} \\
&= \max \left\{
\norm{\sum_\sigma \A_\sigma^{(12)} \otimes \A_\sigma},
\norm{\sum_\sigma \A_\sigma^{(21)} \otimes \A_\sigma} \right\} \enspace,
\end{align*}
where the last equation follows because we just removed from the maximum terms
corresponding to norms of submatrices of the remaining terms. 
Now we can use Lemma~\ref{lem:svaequations} to bound the terms in this
expression (we only give one of the bounds in detail, the other one follows
mutatis mutandis from a mimetic argument).
Let us start with the following simple observation:
\begin{align*}
\norm{\sum_\sigma \A_\sigma^{(12)} \otimes \A_\sigma} &\leq
\sum_\sigma \norm{\A_\sigma^{(12)}} \norm{\A_\sigma} \\ &\leq
\sqrt{\sum_\sigma \norm{\A_\sigma^{(12)}}^2} \sqrt{\sum_\sigma
\norm{\A_\sigma}^2} \enspace.
\end{align*}
The following chain of inequalities provides the final bound:
\begin{align*}
\sum_\sigma \norm{\A_\sigma^{(12)}}^2
&\leq
\sum_\sigma \norm{\A_\sigma^{(12)}}_F^2
\\ &=
\sum_\sigma \sum_{i=1}^{\hat{n}} \sum_{j=\hat{n}+1}^n \A_\sigma(i,j)^2
\\ &=
\sum_{j=\hat{n}+1}^n \sum_{i=1}^{\hat{n}} \frac{\sval_i}{\sval_i} \sum_\sigma
\A_\sigma(i,j)^2
\\ &\leq
\frac{1}{\sval_{\hat{n}}} 
\sum_{j=\hat{n}+1}^n \sum_{i=1}^{n} \sval_i \sum_\sigma \A_\sigma(i,j)^2
\\ &=
\frac{1}{\sval_{\hat{n}}} 
\sum_{j=\hat{n}+1}^n (\sval_j - \azero(j)^2)
\\ &\leq
\frac{\sval_{\hat{n}+1} + \cdots + \sval_n}{\sval_{\hat{n}}} \enspace.
\end{align*}
\end{proof}

Now we can put all the above together in order to obtain the bound in
Theorem~\ref{thm:boundsvatrunc}.
We start with the following bound for $|\Delta_t|$ for some fixed $t$, which
follows from simple algebraic manipulations:
\begin{align*}
|\Delta_t| &\leq \left(C'_1 \rho_f^t + C'_2 \frac{t
\rho_f^{t-1}}{\sqrt{\sval_{\hat{n}}}} \right) \sqrt{\sval_{\hat{n}+1} + \cdots +
\sval_n} \enspace,
\end{align*}
where $C'_1 = 2 \norm{\azero}^2 \norm{\ainf}$ and $C'_2 = 2 \norm{\azero}^2
\norm{\ainf}^2 \left(\sum_\sigma \norm{\A_\sigma}^2\right)^{1/2}$ are constants
depending only on $f$.

Finally, using that $\norm{f - \hat{f}}_2^2 =\sum_{t \geq 0} \Delta_t$, we can
sum all the bounds on $\Delta_t$ and obtain
\begin{align*}
\norm{f - \hat{f}}_2^2 &\leq \left( C_1 + \frac{C_2}{\sqrt{\sval_{\hat{n}}}}
\right) \sqrt{\sval_{\hat{n}+1} + \cdots + \sval_{n}} \\
&\leq C_f \sqrt{\sval_{\hat{n}+1} + \cdots + \sval_{n}}
\enspace,
\end{align*}
with $C_1 = C'_1 / (1 - \rho_f)$, $C_2 = C'_2 / (1 - \rho_f)^2$,
and $C_f = C_1 + C_2/\sqrt{\sval_n}$. 
This concludes the proof of Theorem~\ref{thm:boundsvatrunc}.

Note that Assumption~\ref{ass:rhof} played a crucial role in asserting the
convergence of $\sum_{t \geq 0} \rho_f^t$.
Although the assumption may not hold in general, it can be shown using the
results from \cite[Section 2.3.4]{baillythesis} that there exists a minimal WFA
$D$ with $f_D = f^2$ such that $\norm{\sum_\sigma \mD_\sigma} < 1$.
Thus, by Theorem~\ref{thm:conjugacy} we have $\mD_\sigma = \mQ^{-1}
\A_\sigma^{\otimes 2} \mQ$ for some $\mQ$.
It is then possible to rework the proof of Lemma~\ref{lem:bounddeltat} using $D$
and obtain a similar bound involving $\norm{\sum_\sigma \mD_\sigma}$ instead of
$\norm{\C}$ and $\norm{\tC}$.
The details of this approach will be developed in a longer version of this
paper, but it suffices to say here that in the end one obtains the same bound
given in Theorem~\ref{thm:boundsvatrunc} with a different constant $C'_f$.

\end{document}